\DeclareMathOperator*{\argmax}{arg\,max}
\DeclareMathOperator*{\argmin}{arg\,min}
\DeclareMathOperator*{\opt}{Opt}
\DeclareMathOperator*{\OPT}{OPT}
\DeclareMathOperator*{\roe}{RoE}
\DeclareMathOperator*{\TC}{tc}
\newtheorem{theorem}{Theorem}
\newtheorem{lemma}{Lemma}
\newtheorem{claim}{Claim}
\newtheorem{observation}{Observation}
\newtheorem{remark}{Remark}
\title{Delayed Assignments in Online Non-Centroid Clustering \\ with Stochastic Arrivals}
\date{\empty}
\author{Saar Cohen\\
Department of Computer Science, Bar Ilan University, Ramat Gan, Israel\\
Department of Computer Science, University of Oxford, Oxford, United Kingdom\\
saar30@gmail.com}
\begin{document}

\maketitle

\begin{abstract}
    Clustering is a fundamental problem, aiming to partition a set of elements, like agents or data points, into clusters such that elements in the same cluster are closer to each other than to those in other clusters. In this paper, we present a new framework for studying online non-centroid clustering \textit{with delays}, where elements, that arrive one at a time as points in a finite metric space, should be assigned to clusters, but assignments need \textit{not} be immediate. Specifically, upon arrival, each point's location is revealed, and an online algorithm has to \textit{irrevocably} assign it to an existing cluster or create a new one containing, at this moment, only this point. However, we allow decisions to be postponed at a \textit{delay cost}, instead of following the more common assumption of \textit{immediate} decisions upon arrival. This poses a critical challenge: the goal is to minimize both the total distance costs between points in each cluster and the overall delay costs incurred by postponing assignments. In the classic \textit{worst-case arrival model}, where points arrive in an \textit{arbitrary} order, no algorithm has a competitive ratio better than sublogarithmic in the number of points. To overcome this strong impossibility, we focus on a \textit{stochastic arrival model}, where points' locations are drawn independently across time from an \textit{unknown} and \textit{fixed} probability distribution over the finite metric space. We offer hope for beyond worst-case adversaries: we devise an algorithm that is \textit{\textbf{constant}} competitive in the sense that, as the number of points grows, the ratio between the expected overall costs of the output clustering and an optimal offline clustering is bounded by a constant. 
\end{abstract}

\section{Introduction}

In multiplayer online gaming platforms, players enter the platform over time and are assigned to teams for cooperative gameplay (e.g., completing quests or participating in tournaments). Players usually prefer teammates with similar skill levels, playstyles, and compatible roles. Once a player enters the platform, the platform's team assignment system may postpone the assignment in hope that more compatible players will later join the platform. However, players may become highly unsatisfied if they wait for too long. To optimize their gaming experience, the system should assign players to teams so that players in the same team are more similar to each other than to players in other groups, while minimizing their waiting times for assignments. 


A widely studied model of clustering is \textit{center-based} metric clustering \cite{absalom2022comprehensive}, where elements are points in a metric space whose similarity is measured by distance: closer elements are considered more similar. Clusters are defined by centers, with each point assigned to its nearest center. In contrast, we herein focus on \textit{non-centroid} metric clustering (see, e.g., \cite{Caragiannis2024Proportional}), with no cluster centers. This captures real-world scenarios where elements (e.g., agents) prefer to be close to others in their cluster, like our gaming platform example, clustered federated learning \cite{Sattler2021Clustered}, clustering in social networks \cite{Lee2019Social}, and document clustering \cite{JANANI2019Expert}. Other examples include package-delivery services and ride-sharing platforms. For instance, in last-mile delivery, parcels arrive over time and must be dynamically grouped into vehicle loads, where grouping items with nearby destinations reduces travel distance and improves efficiency. Similarly, in ride-sharing platforms, passengers arrive sequentially and the system must form small groups of riders whose pickup and drop-off locations are mutually compatible. Delaying assignment improves batching opportunities but increases waiting time, exactly the tradeoff modeled in our framework. 

Prior work on non-centroid clustering assumes an \textit{offline} setting, where all elements are known in advance. However, in many real-world scenarios such as our gaming platform example, the input is revealed gradually over time. For that reason, various online clustering problems have received increased attention in recent years (see, e.g., \cite{cohen2022online,lattanzi2021robust,guo2020facility,guo2021Consistent}), yet online \textit{non-centroid metric} clustering has remained largely overlooked. Further, prior works on online clustering often assume that elements are assigned to clusters \textit{immediately} upon arrival, which may lead to poor solutions in practice. For instance, in our gaming platform example, a perfectly compatible teammate might join the platform just after a player has already been assigned to a team.

In this paper, we thus introduce and study a new model of \textit{online} non-centroid metric clustering \textit{with delays}, where elements that arrive one at a time as points in a finite metric space must be assigned to clusters, but assignments need \textit{not} be immediate. Upon arrival, each point's location is revealed. Then, a central authority (i.e., an \textit{online algorithm}) has to \textit{irrevocably} decide whether to assign it to an existing cluster or create a new one containing, at this moment, only this point. However, we allow decisions to be postponed at a \textit{delay cost}, departing from the more common assumption of \textit{immediate} decisions upon arrival. This raises a key challenge: our goal is to minimize both the total distance costs between points in each cluster and the total delay costs incurred by postponing decisions.

If clusters can be of any size, then the non-centroid clustering of singletons is trivially optimal, as it yields zero total cost. The same applies even if we instead seek a \textit{capacitated} clustering, where each cluster must respect an upper bound on its size (see, e.g., \cite{Dinler2016Survey,quy2021fair}). Hence, we focus on clustering with \textit{fixed-size} clusters, where each cluster is required to have a predetermined size (see, e.g., \cite{geetha2009improved}). This reflects many real-world scenarios where lack of size constraints can result in highly skewed clusters with arbitrary sizes, where one cluster is significantly larger than the others with small sizes, thus undermining their practical usefulness. For instance, capacity constraints naturally arise from limitations such as the number of customers an individual salesperson can effectively serve.


When the number of points is restricted to be \textit{even} and each cluster must be of size \textit{exactly} two, our problem reduces to the \textit{Min-cost Perfect Matching with Delays} (MPMD) problem \cite{emek2016online,azar2017polylogarithmic}. MPMD is often studied under the classic \textit{worst-case (adversarial) model}, modeling agents' arrival order and locations as controlled by an \textit{adversary}, whose goal is degrading an algorithm's performance. Here, the performance of an online algorithm is often measured in terms of its \textit{competitive ratio} \cite{fiat1998online}, defined as the worst-case ratio between the algorithm’s total cost and that of an optimal offline solution. However, the worst-case model is overly pessimistic: even if the metric space is known upfront, no online algorithm has a constant competitive ratio \cite{ashlagi2017Min}. 

\paragraph{\textbf{Contributions.}} To enable the design of  effective algorithms in practice, we consider the \textit{unknown i.i.d.} (UIID) model, where points' locations are drawn independently across time from an \textit{unknown} and \textit{fixed} probability distribution over the finite metric space. Under the UIID model, we measure the performance of an online algorithm in terms of its \textit{ratio-of-expectations} (RoE) \cite{mari2024online}, evaluating the expected total costs of an online algorithm with that of an offline optimal clustering, as the number of points increases. This asymptotic perspective captures how an algorithm performs in large-scale and practical scenarios, rather than small pathological worst-case cases that may not truly reflect its efficiency. An algorithm's RoE is at least $1$, with better performance corresponding to a smaller RoE. We offer hope for beyond worst-case adversaries: we devise a greedy algorithm with a \textit{\textbf{constant}} ratio-of-expectations. Finally, we show how our results easily extend to non-metric spaces and cases with lower and upper bounds on cluster sizes.

\section{Related Work}

\textit{Center-based} metric clustering and its variants with cluster size constraints have been extensively studied in both \textit{offline} settings \cite{ahmadian2019better,charikar1999constant,lattanzi2017consistent,mulvey1984solving,geetha2009improved}, and \textit{dynamic} settings \cite{christou2023efficient,guptaa2024capacitated,ailon2009streaming,chan2018fully}. Center-based metric clustering is well-suited for applications such as facility location \cite{arya2001local,fotakis2006incremental,gupta2016approximation,eisenstat2014facility}, where elements correspond to agents and facilities, while cluster centers are potential sites for public infrastructure (e.g., parks), with each agent preferring the facility closest to them. However, in this work we focus on \textit{non-centroid} metric clustering (see, e.g., \cite{Caragiannis2024Proportional}), where there are no cluster centers.

Closely related to our work is the correlation clustering problem introduced by \citet{bansal2004correlation}, where the input is a weighted graph so that a weight's sign indicates, for each pair of points, whether they should be placed in the same cluster. The goal is then to find a clustering that minimizes the number of disagreements with these pairwise recommendations. In \textit{offline} settings, the problem is NP-hard and admits several approximation algorithms (see, e.g., \cite{chawla2015near,giotis2006correlation,demaine2006correlation}). In \textit{online} settings under the classic worst-case model, it has been proven that the competitive ratio of any online algorithm is at least $\Omega(n)$ for a graph with $n$ nodes. To escape this impossibility, online correlation clustering has been explored in either \textit{stochastic} settings \cite{ailon2008aggregating,lattanzi2017consistent} or scenarios that allow re-assignments at a cost \cite{cohen2022online}, where algorithms with a constant competitive ratio have been devised. Yet, online non-centroid clustering in \textit{metric} spaces has been largely overlooked. Further, prior works on online clustering typically assume that assignments to clusters are \textit{immediate} upon arrival. To the best of our knowledge, online non-centroid clustering with \textit{delays} has not been studied before.

As noted earlier, the \textit{Min-cost Perfect Matching with Delays} (MPMD) problem presented by \citet{emek2016online} is a special case of our setting. In the worst-case model with a finite metric space of size $m$ that is \textit{known} upfront, the best known competitiveness is $\mathcal{O}(\log m)$ \cite{azar2017polylogarithmic}, and any algorithm's competitive ratio is at least $\Omega(\log m/\log\log m)$ \cite{ashlagi2017Min}. If the metric space is \textit{unknown}, the current best competitive ratio is $\mathcal{O}(n^{\log(1.5)+\varepsilon}/\varepsilon)$ for $n$ online arrivals and $\varepsilon>0$. While MPMD has been studied under a Poisson arrival model with \textit{known} arrival rates \cite{mari2025online}, we consider the more challenging and general \textit{unknown i.i.d. arrival model}, where neither the distribution, its form nor its parameters are known. Our problem is more closely related to $k$-way MPMD, where elements must be matched into sets of size \textit{exactly} $k$. In the worst-case model, any \textit{randomized} algorithm has a generally \textit{unbounded} competitive ratio \cite{Kawase2025online,kakimura2025deterministic}. Yet, our setting is more challenging, allowing clusters of different sizes. We also study the \textit{unknown i.i.d. model}, giving hope for beyond worst-case adversaries by devising an algorithm with \textit{\textbf{constant}} competitiveness. 

Our work captures cases where agents prefer to be close to others in their cluster, exemplifying \textit{coalition formation}, where \textit{agents} collaborate in \textit{coalitions} instead of acting alone. A popular model for studying coalition formation is \textit{hedonic games} \cite{dreze1980hedonic}, where agents express preferences for coalitions they belong to by disregarding externalities. Hedonic games have been extensively studied in \textit{offline} settings (see, e.g., \cite{aziz_savani_moulin_2016,woeginger2013core,cohen2023complexity}), where settings with \textit{bounded} coalition sizes \cite{monaco2023nash,levinger2024coalition,fioravantes2025exact} and \textit{fixed-size} coalitions \cite{bilo2022hedonic,cseh2019pareto} have recently received increased attention. Yet, the above works consider hedonic games in \textit{offline} settings, while we study \textit{online} non-centroid clustering problems with clusters of a fixed and predetermined size. Recently, an \textit{online} variant of hedonic games has been explored \cite{cohen2024online,cohen2024onlinefriends,cohen2023online,cohen2025online,cohen2025decentralized,cohen2025fair,flammini2021online,bullinger2025online}. However, those studies assume that decisions are \textit{immediate}, while we allow postponing them at a cost. \textbf{In Section \ref{sec:Online Additively Separable Hedonic Games with Agent Types}, we explain how our results also extend to hedonic games.}

\section{Preliminaries}
\label{sec:Preliminaries}

We consider the \textit{Online non-centroid Clustering with Delays} (OCD) problem, where $n$ points, representing elements like agents or data, arrive one at a time over multiple rounds in a finite metric space. Formally, the input to OCD is a sequence $\sigma$ of $n$ points that arrive online over multiple rounds in a finite \textit{metric space} $\mathcal{M} = (\mathcal{X},d)$ with a set $\mathcal{X}$ of $|\mathcal{X}|<\infty$ different locations equipped with a distance function $d: \mathcal{X} \times \mathcal{X} \rightarrow \mathbb{R}^+$ that satisfies the triangle inequality. Each point $i \in \sigma$ is given by its \textit{location} $\ell_i \in \mathcal{X}$ and \textit{arrival time} $t_i \in \mathbb{N}^+$. Note that points indexed consecutively in the ordering $\sigma$ do not necessarily arrive in consecutive rounds, i.e., there may exist distinct points $i,i+1 \in \sigma$ such that $t_{i+1}-t_i \geq 2$, and all points arrive after exactly $T:=\max_{i\in\sigma} t_i$ rounds. Without loss of generality, no two points arrive at the same time. For $k \in \mathbb{N}$, we hereafter denote $[k] := \{1,\dots,k\}$. To generate the sequence $\sigma$ of $n$ points, we consider the \textit{unknown i.i.d.} (UIID) model, where arrivals are sampled independently across time from a \textit{unknown}, \textit{fixed} probability distribution $\{p_{x}\}_{x \in \mathcal{X}}$ satisfying $\sum_{x \in \mathcal{X}} p_{x} \leq 1$. Specifically, each point $i$ arrives at location $x \in \mathcal{X}$ with probability $\mathbb{P}[\ell_i=x] = p_{x}$. However, no point arrives between any two consecutive arrivals of points with probability $1-\sum_{x \in \mathcal{X}} p_{x}$, which is positive if $\sum_{x \in \mathcal{X}} p_{x} < 1$. We denote by $N^t$ the set of points that arrive until time $t \in [T]$, and we set $N := N^T = N^t$ for any time $t \geq T$. 

At each time $t \in [T]$, an online algorithm $\mathcal{A}$ shall produce a \textit{partial clustering} $\mathcal{C}^t$ of the points \textit{which arrived until time $t$} into disjoint subsets of points (i.e., \textit{clusters}), without any knowledge about future points. Once point $i$ arrives at time $t_i$, its location $\ell_i$ is revealed. Then, $\mathcal{A}$ has to \textit{irrevocably} assign point $i$ to an existing cluster in $\mathcal{C}^{t-1}$ or create a new cluster $\{t\}$, if possible. Unlike prior research on online clustering and online coalition formation \cite{cohen2024online,cohen2024onlinefriends,cohen2023online,cohen2025online,cohen2025decentralized,cohen2025fair,flammini2021online,bullinger2025online,cohen2022online}, in our work an online algorithm $\mathcal{A}$ does \textit{not} have to \textit{immediately} assign an arriving point $i$ to a cluster, but it may postpone the decision to some time $s_i \geq t_i$ with a \textit{\textbf{delay cost}}, which equals point $i$'s waiting time, denoted as $w_i := s_i-t_i$. 
However, if no assignment is made at time $t\in [T]$, then $\mathcal{C}^{t}=\mathcal{C}^{t-1}$. Moreover, if points remain unassigned once the last point arrives (i.e., at time $t_n$), then they will keep waiting until the algorithm inserts them into clusters. We thus denote the number of clusters in $\mathcal{C}^t$ as $|\mathcal{C}^t|$ and the cluster in $\mathcal{C}^t$ containing point $i \in N^t$ as $\mathcal{C}^t(i)$.  

Once two distinct points $i \neq j$ are in the same cluster, a \textit{\textbf{connection cost}} is incurred, given by the distance between them plus their delay costs, i.e., $d(\ell_i, \ell_j) + w_i + w_j$. Therefore, we evaluate the quality of the clustering $\mathcal{C}^t$ at time $t\in [T]$ by its \textit{\textbf{total cost}}, defined as the sum of the connection costs in each cluster, i.e., $\TC(\mathcal{C}^t, \mathbf{w}) = \sum_{C \in \mathcal{C}^t} \sum_{i,j \in C: i \neq j} [d(\ell_i, \ell_j) + w_i + w_j]$, where $\mathbf{w} = (w_i)_{i \in N}$ is the \textit{delay profile}. For any cluster $C \in \mathcal{C}^t$ and point $i\in C$, point $i$'s delay cost affects her connection cost to any other agent $j \in C$. This captures scenarios where a point’s delay imposes a cost on \textit{each} cluster member, not just the delayed point. For example, in online gaming, all players must be present before the game starts. Hence, the delay of one player slows down \textit{all} players’ start time, so each cluster member suffers from the waiting time of late arrivals. It is thus reasonable to view that waiting times impose cost proportional to the number of participants affected. 

We also remark that summing connection (distance) and delay (time) costs is a standard objective in the literature on online problems with delays (see, e.g., \cite{emek2016online} and subsequent works). Here, the interpretation is that both terms represent the same unit of cost. If one requires a different conversion via a user-supplied conversion function $f:\mathbb{N}\rightarrow\mathbb{R}^+$, a point $i$ with waiting time $w_i$ suffers a delay cost of $f(w_i)$. Our analysis extends to some choices of $f$, but analyzing the general case is an interesting direction for future work. For example, $f(t)=\lambda t$ with $\lambda>0$ scales time units; here, our analysis carries through with constants depending on $\lambda$ (proofs are almost identical up to this scaling). \textbf{The main goal of $\mathcal{A}$ is thus finding a \textit{cost-optimal} clustering $\mathcal{C}^\star$ that \textit{minimizes} the total cost among all possible clusterings of the $n$ points.}

    

When clusters can be of any size, the clustering of singletons $(\{i\})_{i\in\sigma}$ is trivially cost-optimal, as it incurs a total cost of zero. We therefore focus on realistic cases with capacitated clusters. Namely, given $k$ positive integers $\{n_m\}_{m\in [k]}$ with $n_1\geq n_2 \geq \dots\geq n_k$ and $\sum_{m\in [k]} n_m = n$, we seek a final clustering $\mathcal{C}=(C_m)_{m\in[k]}$ where each point is assigned to one of $k$ clusters such that the size of the $m$-th cluster is exactly $|C_m|=n_m$. We require that $2 \leq k < n$ and $2 \leq n_m < n$ for any $m\in[k]$, as cases where either $k=1$, $n_m=1$, $k=n$ or $n_m=n$ are trivial. Moreover, we treat the cluster sizes $\{n_m\}_{m\in [k]}$ as \textit{constants}, while allowing the number of clusters $k$ to grow with the number of points $n$, i.e., $k$ and $n$ are not constants. This assumption is natural in several applications where cluster sizes are intrinsic to the system. In online gaming, tournament systems, and team-based e-sports platforms, team sizes (e.g., 5-player squads) are fixed and cannot be chosen by the algorithm. Similarly, in ride-sharing and package-delivery services, vehicles have predetermined capacities. Further, in balanced document clustering, practical constraints, such as workloads of downstream human reviewers, also require explicit cardinality bounds. As noted in Section \ref{sec:Lower and Upper Bounds on Cluster Sizes}, our analysis also extends to requiring lower and upper bounds on coalition sizes, further generalizing our model beyond the fixed-size assumption.

We measure the performance of an online algorithm $\mathcal{A}$ in terms of its \textit{ratio-of-expectations} \cite{garg2008stochastic}, comparing the expected cost of $\mathcal{A}$ with the expected cost of the offline cost-optimal clustering. Formally, we denote by $\mathcal{I}$ the distribution over sequences of $n$ points generated by the UIID model. For each sampled sequence $\sigma \sim \mathcal{I}$, let $\OPT(\sigma)$ and $\mathcal{A}(\sigma)$ be the total cost of the \textit{offline} cost-optimal clustering and the clustering generated by running the algorithm $\mathcal{A}$ on $\sigma$, respectively. We also denote $\opt(\mathcal{I}) := \mathbb{E}[\OPT(\sigma)]$ and $\mathcal{A}(\mathcal{I}) := \mathbb{E}[\mathcal{A}(\sigma)]$, where $\mathbb{E}[\cdot]$ is the expectation over $\sigma \sim \mathcal{I}$ and the possible randomness of the algorithm. We thus say that $\mathcal{A}$ is $c$-competitive for $c \geq 1$ under the \textit{ratio-of-expectations} (RoE) if:
\begin{equation}
    \label{eq:roe}
    \roe(\mathcal{A}):=\overline{\lim}_{n \rightarrow \infty} \frac{\mathcal{A}(\mathcal{I})}{\opt(\mathcal{I})} \leq c
\end{equation}
where greater performance is indicated by a smaller RoE. The RoE measures $\mathcal{A}$'s asymptotic behavior and robustness as the number of points grows, thus evaluated at the limit $n \rightarrow \infty$. This highlights $\mathcal{A}$'s efficiency in practical, large-scale scenarios instead of small pathological cases that may not truly reflect $\mathcal{A}$'s effectiveness. 

\section{A \textit{Constant} Competitive Algorithm}
\label{sec:A Constant Competitive Greedy Algorithm}
We now present the deterministic and greedy Algorithm \ref{alg:delayed greedy}, termed as \textit{Delayed Greedy} (\textsc{DGreedy}), which has a \textbf{\textit{constant}} ratio-of-expectations (RoE) and runs as follows. At each time $t$, if no point arrives at time $t$, we proceed to treating pending points (line \ref{state:if next round}-\ref{state:next round}). Specifically, for any pending point $i$, we identify two sets of candidate clusters to which point $i$ can be assigned. First, we construct $\mathcal{S}_{i}$, composed of any \textit{existing} cluster $C_m \in \mathcal{C}^t$ for some $m\in [k]$ such that $|C_m |\leq n_m$ and point $i$'s distance from any point $j \in C_m$ is at most point $i$'s waiting time plus point $j$'s delay cost (lines \ref{state:existing2}). Afterwards, if there are at least two pending points, then we build $\mathcal{D}_{i}$, consisting of any pair $(j,m)\in (N^t\setminus\{i\}) \times [k]$ such that $C_m=\emptyset$ and $j$ is a pending point for which the total waiting time of points $i,j$ exceeds the distance between them (lines \ref{state:if pending}-\ref{state:pending last}). 

\begin{algorithm}[tb]
    \caption{\textsc{DGreedy}}
    \label{alg:delayed greedy}
    \textbf{Input}: A sequence $\sigma$ of $n$ points; Cluster sizes $\{n_m\}_{m\in [k]}$.
    \begin{algorithmic}[1] 
        \State{Initialize an empty clustering $\mathcal{C} \leftarrow (C_m=\emptyset)_{m\in[k]}$.}
        \For{$t = 1, 2, \dots, T$}
            \If{no point arrives at time $t$\label{state:if next round}} 
                \State{Proceed to treating pending points. \label{state:next round}}
            \EndIf
            \For{any pending point $i$} 
                \State{Set $\mathcal{S}_{i} \leftarrow \{m\in[k]:|C_m|<n_m \text{ and } \forall j \in C_m \text{ s.t. } d(\ell_{i}, \ell_{j}) \leq (t-t_i) + w_{j}\}$.\label{state:existing2}}  
                \If{there are at least two pending points\label{state:if pending}}
                    \State{Set $\mathcal{D}_{i} \leftarrow \{ (j,m): i\neq j\in N^t \text{ is pending, }  d(\ell_{i}, \ell_{j}) \leq (t-t_i)+(t-t_j) \text{ and }C_m = \emptyset\}$.\label{state:pending last}}
                \EndIf 
                \If{$|\mathcal{S}_{i}| > 0$} \Comment{\textbf{Best existing cluster}\label{state:existing non empty}}
                    \State{Breaking ties arbitrarily, pick $m_1 \in \argmin_{m \in \mathcal{S}_{i}} \TC(\mathcal{C}_{+\{i\}\rightarrow m})$, where $\mathcal{C}_{+C'\rightarrow m}$ is the clustering obtained from $\mathcal{C}$ by adding points in some $C' \subseteq N$ to  $C_{m}$.\label{state:existing best}}
                \Else { $m_1=0$ \label{state:existing empty}}
                \EndIf
                \If{$|\mathcal{D}_{i}| > 0$} \Comment{\textbf{Best pending pair}\label{state:pending non empty}}
                    \State{$\mathcal{D}_{i}^\star = \argmin_{(j,m) \in \mathcal{D}_{i}} \TC(\mathcal{C}_{+\{i,j\}\rightarrow m})$.}
                    \State{\textbf{Pick a pair with maximum total waiting time:}\label{state:best pending max}}
                    \State{Select $(j,m_2) \in \argmax_{(j,m) \in \mathcal{D}_{i}^\star} (t-t_j)$ (ties broken arbitrarily).\label{state:best pending last}} 
                \Else { $m_2=0$ \label{state:pending empty}}
                \EndIf 
                
                \If{$m_2 \neq 0$ \textbf{and} ($m_1 = 0$ \textbf{or} ($m_1 \neq 0$ \textbf{and}  $\TC(\mathcal{C}_{+\{i,j\}\rightarrow m_2}) \leq \TC(\mathcal{C}_{+\{i\}\rightarrow m_1})$))\label{state:assign to pending}} 
                \State{Add $i,j$ to the currently empty cluster $C_{m_2}$.\label{state:assign to pending last}}
                \ElsIf{$m_1 \neq 0$ \textbf{and} $m_2 = 0$ \label{state:assign to existing}}
                    \State{Add point $i$ to the existing cluster $C_{m_1}$.\label{state:assign to existing last}}
                \EndIf

            \EndFor 
            
        \EndFor
    \end{algorithmic}
\end{algorithm}

If $|\mathcal{S}_{i}| >0$, then we pick $m_1\in\mathcal{S}_{i}$ such that inserting point $i$ to the cluster $C_{m_1}$ \textit{minimizes} the increase in the total cost of the current clustering (lines \ref{state:existing non empty}-\ref{state:existing best}); otherwise, we set $m_1=0$ (line \ref{state:existing empty}). Similarly, if $|\mathcal{D}_i| > 0$, then we select a pair $(j,m_2) \in \mathcal{D}_i$ such that forming a new cluster $\{i,j\}$ by putting points $i,j$ in the currently empty $m_2$-th cluster $C_{m_2}$ \textit{minimizes} the increase in the total cost of the current clustering (line \ref{state:pending non empty}). Among all such options, we pick the one that \textit{maximizes} the total waiting time of its points (lines \ref{state:best pending max}-\ref{state:best pending last}); otherwise, we set $m_2=0$ (line \ref{state:pending empty}). Intuitively, by breaking ties based on the total waiting time, we balance immediate connection costs with long-term delay costs, ensuring that highly delayed points are treated sooner, preventing their costs from growing even further. 

\begin{figure*}[t!]
\centering

\newcommand{\point}[4]{
  \fill[#4] (#1, #2) circle (2pt);
  \node at (#1, #2 - 0.3) {\scriptsize #3};
}

\newcommand{\expandingball}[4]{
  \draw[#4, thick] (#1, #2) circle (#3);
}

\begin{subfigure}[b]{0.23\textwidth}
\centering
\begin{tikzpicture}[scale=0.9]
  \point{0}{1}{\textbf{1}}{red}
\end{tikzpicture}
\caption{\centering $t=1$ -- Point $1$ arrives.}
\end{subfigure}
\begin{subfigure}[b]{0.23\textwidth}
\centering
\begin{tikzpicture}[scale=0.9]
  \point{0}{1}{\textbf{1}}{red}
  \expandingball{0}{1}{1}{red}
\end{tikzpicture}
\caption{\centering $t=2$ -- No point arrives.}
\end{subfigure}
\begin{subfigure}[b]{0.23\textwidth}
\centering
\begin{tikzpicture}[scale=0.9]
  \point{0}{1}{\textbf{1}}{red}
  \point{1}{1}{\textbf{2}}{green!70!black}
  \expandingball{0}{1}{1}{red}
\end{tikzpicture}
\caption{\centering $t=3$ -- Point $2$ arrives and clustered together with point $1$. Point $1$'s delay cost is $2$, while point $2$'s delay cost is $0$.}
\end{subfigure}
\begin{subfigure}[b]{0.23\textwidth}
\centering
\begin{tikzpicture}[scale=0.9]
  \point{0}{1}{\textbf{1}}{red}
  \point{1}{1}{\textbf{2}}{green!70!black}
  \draw[thick] (0,1) -- (1,1);
  \draw[thick] (0.5,0.5) -- (1,1);
  \draw[thick] (0,1) -- (0.5,0.5);
  \point{0.5}{0.5}{\textbf{3}}{blue}
  \expandingball{0}{1}{1}{red}
\end{tikzpicture}
\caption{\centering $t=4$ -- Point $3$ arrives and joins the cluster since its distance of $2$ to points $1$ and $2$ is no greater than its waiting time ($0$) plus their delay cost ($2$).}
\end{subfigure}

\caption{An example of how \textsc{DGreedy} works for clusterings with a single cluster of size $3$ on a sequence of three points $1, 2, 3$ arriving at times $1, 3, 4$ in a finite metric space consisting of $3$ locations, where the distance between each pair of locations is $2$.}
\label{fig:greedy}
\end{figure*}

If we indeed picked a pair $(j,m_2) \in \mathcal{D}_i$ as above (i.e., $m_2 \neq 0$), then we insert points $i,j$ into the currently empty cluster $C_{m_2}$ if either no existing cluster with minimum increase in total cost was also selected (i.e., $m_1=0$), or an existing cluster $C_{m_1}$ was found and the increase in total cost incurred by forming the new cluster $\{i,j\}$ is at most that of adding point $i$ to $C_{m_1}$ (lines \ref{state:assign to pending}-\ref{state:assign to pending last}). Otherwise, if $m_1\neq 0$ and $m_2=0$, we add point $i$ to the existing cluster $C_{m_1}$ (lines \ref{state:assign to existing}-\ref{state:assign to existing last}). In any other case, point $i$ will continue to wait and we proceed to the next pending point.

Note that the algorithm is well-defined. Throughout its execution, \textsc{DGreedy} enforces that the $m$-th cluster $C_m$ has size at most $n_m$ for each $m\in [k]$, either by adding points to $C_m$ only when it contains fewer than $n_m$ members, or by forming new clusters only of size $2$ by assigning two pending points to an empty cluster. As $\sum_{m\in [k]} n_m = n$, every point is eventually assigned to a cluster since the metric space $\mathcal{M}$ contains a \textit{finite} number of points, yielding that the \textit{last} point's waiting time is bounded by the diameter of $\mathcal{M}$. We formulate this in the following observation:

\begin{observation}
    \label{obs:size k}
    For any sequence of points $\sigma$, the final clusters generated by \textsc{DGreedy} contain exactly $k$ clusters, where the $m$-th cluster is of size exactly $n_m$ for any $m \in [k]$.
\end{observation}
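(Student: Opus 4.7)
The plan is to establish two complementary facts: (a) the size invariant $|C_m| \leq n_m$ holds throughout the execution of \textsc{DGreedy}, and (b) each of the $n$ arriving points is eventually assigned. Combined with the identity $\sum_{m\in[k]} n_m = n$, these force $|C_m| = n_m$ for every $m \in [k]$ at termination, which yields precisely $k$ non-empty clusters with the prescribed sizes.

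For (a), the strategy is a direct induction on the round $t$. The base case is trivial since the clustering starts empty. For the inductive step, only two operations can modify $\mathcal{C}$: adding a single pending point $i$ to cluster $C_{m_1}$ (line \ref{state:assign to existing last}), or adding two pending points to an empty cluster $C_{m_2}$ (line \ref{state:assign to pending last}). The former uses $m_1 \in \mathcal{S}_i$, which by the definition on line \ref{state:existing2} requires $|C_{m_1}| < n_{m_1}$, so the size grows to at most $n_{m_1}$; the latter makes $|C_{m_2}| = 2 \leq n_{m_2}$ by the preliminary assumption that $n_m \geq 2$ for every $m$. Hence $|C_m| \leq n_m$ is preserved at each round.

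For (b), introduce $D := \max_{x,y \in \mathcal{X}} d(x,y)$, which is finite since $\mathcal{X}$ is finite. The key observation is that once $t \geq t_n + D$, any pending point $i$ has waiting time $t - t_i \geq D$, so the distance thresholds in the definitions of $\mathcal{S}_i$ and $\mathcal{D}_i$ (lines \ref{state:existing2} and \ref{state:pending last}) are automatically satisfied: every non-full cluster qualifies for $\mathcal{S}_i$, and every other pending point together with an empty cluster qualifies for $\mathcal{D}_i$. A counting argument using $\sum_m n_m = n$ then shows that whenever at least one point is still pending, at least one non-full cluster exists, so $\mathcal{S}_i \neq \emptyset$ and $m_1 \neq 0$. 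Consequently, the conditional on lines \ref{state:assign to pending}--\ref{state:assign to existing last} assigns at least one pending point per round after $t_n + D$.

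The main obstacle will be checking that the conditional structure does not deadlock in edge cases where both $m_1 \neq 0$ and $m_2 \neq 0$. Here the same counting identity helps: when only one point remains pending, the total deficit $\sum_m (n_m - |C_m|) = 1$ forces the unique non-full cluster to be non-empty (otherwise an empty cluster alone would contribute $n_m \geq 2$ to the deficit), so $m_2 = 0$ and the \textbf{ElsIf} branch in line \ref{state:assign to existing} fires directly. When multiple points are pending, iterating through the inner loop of \textsc{DGreedy} over successive rounds eventually assigns each of them, since waiting times keep growing while distances in the metric space are uniformly bounded by $D$, so the cost comparison in line \ref{state:assign to pending} cannot indefinitely favor the single-point option over pairing two pending points with inflated waiting times. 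Combining termination with the size invariant and $\sum_m n_m = n$ yields $|C_m| = n_m$ for all $m$, completing the argument.
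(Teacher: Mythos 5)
Your decomposition (size invariant $|C_m|\leq n_m$ by induction, plus termination because waiting times eventually dominate the finite diameter $d_{\max}$, plus the counting identity $\sum_m n_m = n$) matches the paper's own terse argument preceding the observation, and parts (a) and the single-pending-point case of (b) are fine. However, the multi-pending-point termination argument contains a genuine gap: the claim that "the cost comparison in line \ref{state:assign to pending} cannot indefinitely favor the single-point option" is false. Once the members $p$ of $C_{m_1}$ are assigned, their delay costs $w_p$ are frozen, so the incremental cost $\sum_{p\in C_{m_1}}\bigl[d(\ell_i,\ell_p)+w_i+w_p\bigr]$ grows only at rate $|C_{m_1}|$ in $t$, whereas the pairing cost $d(\ell_i,\ell_j)+w_i+w_j$ grows at rate $2$. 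When $|C_{m_1}|\leq 2$ the comparison can therefore permanently favor the single-point option, and under a literal reading of the pseudocode --- where the ElsIf at line \ref{state:assign to existing} additionally requires $m_2=0$ --- neither branch would ever fire, so $i$ would stay pending forever, contradicting the observation. The paper's own detailed argument inside the proof of Lemma \ref{lemma:waiting time of late points} ("...DGreedy will assign point $i$ to either the $m_1$-th cluster or a currently empty cluster, depending on which yields a lower total cost") quietly assumes the ElsIf fires whenever $m_1\neq 0$ and the first branch does not, i.e., it drops the spurious $m_2=0$ guard. Under that reading the deadlock you worry about never arises and your remaining argument goes through; you should either note that intended behavior explicitly, or drop the incorrect claim that the cost comparison eventually swings toward pairing, since that is what fails.
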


\paragraph{\textbf{Geometric Interpretation}.} To better grasp our algorithm, we consider its geometric interpretation (See Figure \ref{fig:greedy} for an illustration). When point $i$ arrives, a ball centered at its location $\ell_i$ expands uniformly over time and stops growing when point $i$ is assigned to a cluster. The ball's radius represents the delay cost for leaving point $i$ unassigned. Thus, once the ball of a pending point $i$ intersects that of either some point within an existing cluster or another pending point, point $i$ is assigned to such an intersecting cluster.

Next, we first provide in an upper bound on the (expected) total cost of the clustering generated by \textsc{DGreedy} (Section \ref{sec:upper bound delayed greedy}). After devising a lower bound on the (expected) total cost of the optimal (offline) clustering (Section \ref{sec:Lower Bounding the Optimal total cost}), we infer \textsc{DGreedy}'s \textit{\textbf{constant}} ratio-of-expectations (Section \ref{sec:Ratio-of-Expectations greedy}).

\subsection{Upper Bounding \textsc{DGreedy}'s total cost}
\label{sec:upper bound delayed greedy}
In this section, we obtain an upper bound on the expected total cost of the clustering produced by \textsc{DGreedy}, and begin with a general bound on the incurred total cost:
\begin{lemma}
    \label{lemma:at most twice the delay cost}
    Given a {\normalfont finite} metric space $\mathcal{M}=(\mathcal{X},d)$, for any sequence of $n$ points $\sigma$, the total cost of the final clustering $\mathcal{C}$ generated by \textsc{DGreedy} satisfies:
    $$\TC(\mathcal{C}, \mathbf{w}) \leq 2 (n_1-1) \sum_{i \in \sigma} w_i$$
\end{lemma}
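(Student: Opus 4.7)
The plan is to first establish a simple structural invariant for \textsc{DGreedy}: whenever two points $i$ and $j$ end up in the same final cluster, we have $d(\ell_i, \ell_j) \leq w_i + w_j$. Given this, both the distance and the delay portions of $\TC(\mathcal{C},\mathbf{w})$ can be bounded separately by $(n_1-1)\sum_{i\in\sigma} w_i$ via a routine counting argument over cluster pairs, which will yield the desired factor of $2(n_1-1)$.

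I would verify the invariant by case analysis on how $\{i,j\}$ first come to lie in a common cluster at some round $t$. In the first case, a pending point (say $i$) is added to an existing cluster $C_{m_1}$ already containing $j$; then $m_1 \in \mathcal{S}_i$ forces $d(\ell_i,\ell_j) \leq (t-t_i)+w_j$, and since $i$'s waiting terminates at $t$ we have $w_i = t-t_i$, giving $d(\ell_i,\ell_j) \leq w_i + w_j$. (The sub-case where $j$ joins an existing cluster containing $i$ is symmetric.) In the second case, $i$ and $j$ are both pending and are placed together into an empty cluster $C_{m_2}$; the condition $(j,m_2)\in\mathcal{D}_i$ yields $d(\ell_i,\ell_j) \leq (t-t_i) + (t-t_j)$, and both delays are realized at $t$, so again $d(\ell_i,\ell_j) \leq w_i + w_j$. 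The key subtle point is that in the first case, $w_j$ is already the \emph{final} waiting time of $j$ (since $j$ was assigned in some earlier round and stopped waiting then), so the inequality recovered at time $t$ is exactly the one we need in the final clustering.

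With the invariant in hand, for every final cluster $C_m$ of size $n_m \leq n_1$, I would observe the identity
\[
\sum_{\{i,j\}\subseteq C_m}(w_i+w_j) \;=\; (n_m-1)\sum_{i\in C_m} w_i,
\]
since each $w_i$ appears in exactly $(n_m-1)$ unordered pairs. Applying the invariant gives $\sum_{\{i,j\}\subseteq C_m} d(\ell_i,\ell_j) \leq (n_m-1)\sum_{i\in C_m} w_i$ as well. Summing both parts of $\TC(\mathcal{C},\mathbf{w})$ over clusters, using $n_m \leq n_1$ and the fact that $\mathcal{C}$ partitions $\sigma$ (Observation~\ref{obs:size k}), yields
\[
\TC(\mathcal{C},\mathbf{w}) \;\leq\; 2\sum_{C_m\in\mathcal{C}}(n_m-1)\sum_{i\in C_m} w_i \;\leq\; 2(n_1-1)\sum_{i\in\sigma} w_i.
\]

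The main obstacle is the first step: correctly aligning the algorithmic thresholds in $\mathcal{S}_i$ and $\mathcal{D}_i$ (which mix ``$t-t_i$'' for the currently-processed point with either the fixed delay $w_j$ of a cluster member or the live wait $t-t_j$ of a pending peer) with the final delay profile $\mathbf{w}$. Once this case analysis is carried out cleanly, the rest is elementary double counting and the monotonicity $n_m\leq n_1$.
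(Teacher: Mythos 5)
Your proposal is correct and takes essentially the same route as the paper: establish the pairwise bound $d(\ell_i,\ell_j)\leq w_i+w_j$ for any two co-clustered points, then double-count over unordered pairs within each cluster using $|C|\leq n_1$. In fact your case analysis verifying that invariant (via the thresholds defining $\mathcal{S}_i$ and $\mathcal{D}_i$, and noting that an already-placed cluster member $j$ has its final $w_j$ fixed) is more explicit than the paper's proof, which simply asserts the inequality as following from the algorithm's assignment criteria.
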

\begin{proof}
    For any pair of points $i$ and $j \in \mathcal{C}(i)$ with $i\neq j$, as their connection cost is at most the sum of their delay costs, i.e.,  $d(\ell_{i}, \ell_{j}) \leq w_i + w_{j}$, we obtain: 
    \begin{equation*}
        \begin{aligned}
            \TC(\mathcal{C}, \mathbf{w}) &= \sum_{C \in \mathcal{C}} \sum_{i, j \in C: i \neq j} [d(\ell_{i}, \ell_{j}) + w_i + w_j] \\
            &\leq \sum_{C \in \mathcal{C}} \sum_{i, j \in C: i \neq j} 2[w_i + w_{j}] \\
            &= 2 \sum_{C \in \mathcal{C}} (|C|-1) \sum_{i \in C} w_i \leq 2 (n_1-1) \sum_{i \in \sigma} w_i
        \end{aligned}
    \end{equation*}
    where the last inequality is by $|C|\leq n_1$ for any cluster $C \in \mathcal{C}$ (Observation \ref{obs:size k}) and the $m$-th cluster's size satisfies $n_m \leq n_1$. 
\end{proof}

Hence, by finding an upper bound on the expected total delay cost, we can estimate the overall expected total cost of the clustering produced by \textsc{DGreedy}. For this sake, we next bound the delay cost of each point. For any subset of locations $\mathcal{Y} \subseteq \mathcal{X}$, if a point arrived at some location in $\mathcal{Y}$ during time $t$, then we next make the following observation regarding the \textit{\textbf{waiting time}} $\tau_t^\mathcal{Y}$ for another point to arrive at some location in $\mathcal{Y}$:
\begin{observation}    
    \label{obs:expected waiting time between arrivals}
    In the UIID model with a {\normalfont finite} metric space $\mathcal{M}=(\mathcal{X},d)$, the waiting time $\tau_t^\mathcal{Y}$ is geometrically distributed with a success probability of $q_{\mathcal{Y}} := \sum_{y \in \mathcal{Y}} p_{y}$, and thus its expectation in the UIID model is $\mathbb{E}[\tau_t^\mathcal{Y}]= \frac{1}{q_{\mathcal{Y}}}$.
\end{observation}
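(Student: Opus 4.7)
The plan is to derive the statement directly from the UIID model's temporal independence. I would first observe that, for each time $t' > t$, the UIID assumption dictates that a point arrives at location $x \in \mathcal{X}$ with probability $p_x$, independently of arrivals in all other rounds. Summing over $y \in \mathcal{Y}$, the probability that a point arrives at \emph{some} location in $\mathcal{Y}$ at time $t'$ is exactly $q_{\mathcal{Y}} = \sum_{y \in \mathcal{Y}} p_{y}$, and these events are mutually independent across different values of $t'$.

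Next, I would formalize $\tau_t^\mathcal{Y}$ as the smallest positive integer $s$ such that a point arrives at some location in $\mathcal{Y}$ at time $t+s$. Letting $X_s := \mathbbm{1}[\text{a point arrives in } \mathcal{Y} \text{ at time } t+s]$ for $s \geq 1$, the previous step implies that $(X_s)_{s \geq 1}$ is an i.i.d.\ Bernoulli$(q_{\mathcal{Y}})$ sequence. Since $\tau_t^\mathcal{Y} = \min\{s \geq 1 : X_s = 1\}$ is by definition the index of the first success in such a sequence, $\tau_t^\mathcal{Y}$ is geometrically distributed with success probability $q_{\mathcal{Y}}$. Applying the standard closed-form for the mean of a geometric random variable then yields $\mathbb{E}[\tau_t^\mathcal{Y}] = 1/q_{\mathcal{Y}}$.

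The \textbf{main obstacle} is essentially absent, since the statement is a direct translation of the UIID model's independence-across-rounds property together with well-known facts about geometric distributions. The only subtleties worth flagging are: (i) one must explicitly note that conditioning on the arrival at time $t$ does not affect arrivals at later times (this is immediate from the i.i.d.\ assumption, so no conditional probability calculation is required); and (ii) the expectation $1/q_{\mathcal{Y}}$ is meaningful only when $q_{\mathcal{Y}} > 0$. If $q_{\mathcal{Y}} = 0$, no arrival in $\mathcal{Y}$ ever occurs and $\tau_t^\mathcal{Y}$ is infinite almost surely, so the observation is implicitly applied to subsets $\mathcal{Y}$ with positive mass, as will be the case in the subsequent analysis.
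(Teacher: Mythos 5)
Your proof is correct and matches what the paper leaves implicit: the paper states this as an observation without an explicit proof, and your argument — that per-round arrivals in $\mathcal{Y}$ are independent $\mathrm{Bernoulli}(q_{\mathcal{Y}})$ trials, making $\tau_t^{\mathcal{Y}}$ a first-success time, hence geometric with mean $1/q_{\mathcal{Y}}$ — is exactly the intended reasoning. The caveats you flag (conditioning on the time-$t$ arrival is irrelevant to later rounds by independence, and $q_{\mathcal{Y}}>0$ is needed for the expectation to be finite) are the right ones.
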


To bound  each point's delay cost, we distinguish between two types of points that may arrive after it. First, for any possible location $x \in \mathcal{X}$, let $\bar{B}(x,r)$ (resp. $B(x,r)$) be the \textit{closed} (resp. \textit{open}) ball centered at $x$ with radius $r > 0$, i.e., the set of locations $y$ with $d(x,y) \leq r$ (resp. $d(x,y) < r$). A point arriving at location $x$ is said to be assigned to a cluster together with a \textit{close point} if the distance between them is at most location $x$'s radius $r_x$, which is defined as:
\begin{equation}
    \label{eq:radius at locations x}
    \begin{aligned}
        r_x := \min \left\{r \geq 0: \frac{1}{\sum_{y \in \bar{B}(x,r)} p_{y}} \leq r\right\}
    \end{aligned}
\end{equation}
Intuitively, the choice of $r_x$ balances the expected waiting time $\mathbb{E}[\tau_t^{\bar{B}(x,r)}]$ between two arrivals within the ball $\bar{B}(x,r)$ (by Observation \ref{obs:expected waiting time between arrivals}) and this ball's diameter. It is also well-defined since the function $r \mapsto \mathbb{E}[\tau_t^{\bar{B}(x,r)}]$ is non-increasing, and thus $r_x \in (0, 1/p_x]$. 

For any point $i$, we say that it is an \textit{\textbf{early point}} if $i$ is not pending at time $t_i+r_{\ell_i}$ and there is another point $j$ with $t_j-t_i > r_{\ell_i}$ such that $d(\ell_i, \ell_{j}) \leq r_{\ell_i}$; otherwise, we say that point $i$ is a \textit{\textbf{late point}}. For each \textit{early} point $i$, we set:
\begin{equation*}
         \alpha_i^{\text{early}} := \begin{cases}
        0, \text{ if $i$ is assigned to a cluster at time $t_i + r_{\ell_i}$}\\
        \min_{j \in N} \{t_j-t_i - r_{\ell_i}: t_j-t_i > r_{\ell_i} \text{ and } d(\ell_i, \ell_{j}) \leq r_{\ell_i}\},\text{ otherwise}
    \end{cases}
\end{equation*}
Now, we upper bound the waiting time of the \textit{\textbf{early}} point $i$:
\begin{lemma}
    \label{lemma:waiting time of early points}
    In the UIID model with a {\normalfont finite} metric space $\mathcal{M}=(\mathcal{X},d)$, for any sequence of $n$ points $\sigma$ and each {\normalfont early} point $i\in\sigma$, it holds that $w_i \leq r_{\ell_i} + \alpha_i^{\text{early}}$.
\end{lemma}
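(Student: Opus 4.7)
The plan is to apply the two defining conditions of an early point directly and then finish via the case split in the definition of $\alpha_i^{\text{early}}$. First, I would note that by the definition of an early point, (i) point $i$ is not pending at time $t_i + r_{\ell_i}$, so its assignment time $s_i$ satisfies $s_i \leq t_i + r_{\ell_i}$, and hence $w_i = s_i - t_i \leq r_{\ell_i}$; and (ii) there exists at least one other point $j$ with $t_j - t_i > r_{\ell_i}$ and $d(\ell_i, \ell_j) \leq r_{\ell_i}$. Condition (i) already yields the key estimate $w_i \leq r_{\ell_i}$, while condition (ii) guarantees that the minimum appearing in the second branch of the definition of $\alpha_i^{\text{early}}$ is taken over a non-empty set and is therefore well-defined and non-negative.

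Next, I would finish by splitting according to which branch of the definition of $\alpha_i^{\text{early}}$ applies to $i$. In the first branch, $i$ is assigned exactly at time $t_i + r_{\ell_i}$, so $w_i = r_{\ell_i}$ and $\alpha_i^{\text{early}} = 0$, giving the claimed bound with equality. In the second (``otherwise'') branch, $i$ is not assigned exactly at $t_i + r_{\ell_i}$; combined with condition (i) above, this forces $s_i < t_i + r_{\ell_i}$, and hence $w_i < r_{\ell_i}$. Since $\alpha_i^{\text{early}}$ is then the minimum of strictly positive quantities $t_j - t_i - r_{\ell_i}$ over the non-empty set of close later-arriving points $j$ guaranteed by (ii), we obtain $\alpha_i^{\text{early}} > 0$ and thus $w_i < r_{\ell_i} < r_{\ell_i} + \alpha_i^{\text{early}}$.

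The hard part, if any, is really careful bookkeeping rather than algorithmic reasoning: once the ``not pending at time $t_i + r_{\ell_i}$'' clause in the definition of an early point is identified as the sole source of control on $w_i$, the inequality follows from a short two-case analysis without invoking the detailed rules of \textsc{DGreedy}. The quantity $\alpha_i^{\text{early}}$ itself is not strictly needed to prove the bound in this lemma; it is introduced here because it will be indispensable in the subsequent RoE analysis, where $\mathbb{E}[\alpha_i^{\text{early}}]$ is to be controlled using the geometric distribution of the inter-arrival time $\tau_t^{\bar{B}(\ell_i, r_{\ell_i})}$ from Observation \ref{obs:expected waiting time between arrivals}.
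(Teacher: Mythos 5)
Your proof hinges on reading the definition of an early point as a conjunction: that $i$ is simultaneously (i) not pending at time $t_i + r_{\ell_i}$ and (ii) has a later close point $j$. From (i) alone you extract $w_i \leq r_{\ell_i}$ and declare the lemma essentially trivial. That reading is a plausible parse of the sentence as written, but it is almost certainly not what is meant, and the rest of the paper only makes sense under the other parse. The paper's own proof of this lemma explicitly handles the case where $i$ is \emph{still pending} at a time $t_j > t_i + r_{\ell_i}$, which is impossible under your reading. More decisively, the proof of the first part of Lemma~\ref{lemma:waiting time of late points} deduces ``$t_j - t_i \leq r_{\ell_i}$ \emph{and} $i$ is unassigned at $t_j$'' merely from $i$ being a late point; this is the negation of the conjunction only if ``early'' is defined as a \emph{disjunction} (``not pending at $t_i + r_{\ell_i}$ \emph{or} there exists a close later point''), so that ``late'' means ``pending at $t_i + r_{\ell_i}$ \emph{and} no close point arrives after $t_i + r_{\ell_i}$.'' The fact that your reading renders $\alpha_i^{\text{early}}$ superfluous to the lemma (a point you yourself note) is a further red flag: under the intended definition $\alpha_i^{\text{early}}$ is exactly the slack that measures how long past $t_i + r_{\ell_i}$ one must wait for the saving arrival $j$.

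Under the intended (disjunctive) definition there is a genuine gap in your argument: an early point may well be pending at $t_i + r_{\ell_i}$, in which case $w_i \leq r_{\ell_i}$ does not follow and there is no way around invoking the mechanics of \textsc{DGreedy}. The content of the lemma is precisely the algorithmic claim you set aside: if $i$ is still pending when the first point $j$ with $t_j - t_i > r_{\ell_i}$ and $d(\ell_i,\ell_j) \leq r_{\ell_i}$ arrives, then $i$ must be assigned to \emph{some} cluster by time $t_j$. One has to check that at time $t_j$ the pair $\{i,j\}$ is eligible (since $(t_j - t_i)+(t_j-t_j) > r_{\ell_i} \geq d(\ell_i,\ell_j)$) and, via the triangle inequality, rule out the scenario where $j$ was already absorbed into a cluster with some other pending $j'$ before $i$ could pair with anyone — if $d(\ell_j,\ell_{j'}) \leq t_j - t_{j'}$, then $d(\ell_i,\ell_{j'}) \leq d(\ell_i,\ell_j) + d(\ell_j,\ell_{j'}) < (t_j-t_i)+(t_j-t_{j'})$, so $i$ and $j'$ would have been a valid match even earlier, contradicting $i$ still being pending. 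That triangle-inequality step is the core of the proof and cannot be replaced by bookkeeping on the definition alone.
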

\begin{proof}
    Let $i$ be an early point. Let $j$ be the \textit{first} point with $t_j-t_i > r_{\ell_i}$ such that $d(\ell_i, \ell_{j}) \leq r_{\ell_i}$, i.e., $t_j-t_i = r_{\ell_i}+\alpha_i^{\text{early}}$. If point $i$ has already been assigned to a cluster at time $t_j$, then $s_i \leq t_j$ and $w_i = s_i - t_i \leq t_j - t_i = r_{\ell_i}+\alpha_i^{\text{early}}$. Otherwise, we show that a cluster containing points $i, j$ is created at time $t_j$. Note that $(t_j-t_j)+(t_j-t_i) = r_{\ell_i}+\alpha_i^{\text{early}} > r_{\ell_i} \geq d(\ell_i, \ell_{j})$. As points $i$ and $j$ are pending points at time $t_j$ and the greedy criteria is met, then clusters of pending points that contain $i, j$ are considered eligible by \textsc{DGreedy}. Assume, towards contradiction, that there is another pending point $j'$ such that a cluster containing points $j$ and $j'$ has already been formed at time $t_j$, i.e., $d(\ell_{j},\ell_{j'}) \leq t_j-t_{j'}$. By the triangle inequality, we have: $(t_j-t_i)+(t_j-t_{j'}) > d(\ell_i, \ell_{j}) + d(\ell_{j},\ell_{j'}) \geq d(\ell_{i},\ell_{j'})$. That is, a cluster containing the points $i, j'$ should have been created before the arrival of point $j$, which is a contradiction. We infer that $w_i = s_i - t_i = r_{\ell_i}+\alpha_i^{\text{early}}$.
\end{proof}

In the following, we bound the total delay cost of \textit{\textbf{late}} points. Sadly, the waiting time of a late point may be the highest value possible, i.e., the diameter $d_{\max} := \max_{x,y \in \mathcal{X}} d(x,y)$ of the metric space $\mathcal{M}$. However, we prove that only a few such points exist. Recalling that $t_n$ is the arrival time of the last point, then we define: 
\begin{equation*}
        \alpha_i^{\text{late}} := \begin{cases}
        0 , \text{ if $t_n \leq t_i + d_{\max}$}\\
        \min_{j \in N} \{t_j - (t_i+d_{\max}) : t_j > t_i + d_{\max}\} , \text{ otherwise}
    \end{cases}
\end{equation*}
Therefore, we obtain that:
\begin{lemma}
    \label{lemma:waiting time of late points}
    In the UIID model with a {\normalfont finite} metric space $\mathcal{M}=(\mathcal{X},d)$, for any sequence $\sigma$ and each location $x \in \mathcal{X}$:
    \begin{enumerate}
        \item There is at most one {\normalfont late} point at $x$, i.e., there are at most $|\mathcal{X}|<\infty$ late points. 
        \item For each {\normalfont late} point $i$, it holds that $w_i \leq d_{\max} + \alpha_i^{\text{late}}$.
    \end{enumerate}
\end{lemma}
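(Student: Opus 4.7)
The plan is to prove the two parts separately. Part 1 (at most one late point per location) is handled by contradiction: two late points at the same $x$ must arrive within $r_x$ of each other and be simultaneously pending, at which point the greedy rule is forced to pair them. Part 2 (the $d_{\max}+\alpha_i^{\text{late}}$ bound) is a direct argument at time $T:=t_i+d_{\max}+\alpha_i^{\text{late}}$, where the distance constraint becomes trivially satisfied and a capacity count guarantees that \textsc{DGreedy} has a valid option.

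For Part 1, suppose $i\neq i'$ are both late with $\ell_i=\ell_{i'}=x$ and (WLOG) $t_i<t_{i'}$. Applying the no-close-later clause in $i$'s lateness to the candidate $j=i'$, and noting $d(\ell_i,\ell_{i'})=0\leq r_x$, the inequality $t_{i'}-t_i>r_x$ would make $i'$ a close-later witness for $i$; hence $t_{i'}-t_i\leq r_x$. Since $s_i>t_i+r_x\geq t_{i'}$ and $s_{i'}>t_{i'}+r_x>t_i+r_x$, both $i$ and $i'$ are pending at time $t_{i'}$. At this time I examine \textsc{DGreedy}'s processing of $i$: provided some cluster is currently empty, the pair $(i,i')\in\mathcal{D}_i$ has pair-cost $0+(t_{i'}-t_i)+0=t_{i'}-t_i$, which I claim is minimal. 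Indeed, any other pending $j\neq i'$ must satisfy $t_j<t_{i'}$, so the pair-cost $d(\ell_i,\ell_j)+(t_{i'}-t_i)+(t_{i'}-t_j)>t_{i'}-t_i$; and for any non-empty $C_{m_1}\in\mathcal{S}_i$, the insertion cost $\sum_{j\in C_{m_1}}[d(\ell_i,\ell_j)+w_i+w_j]\geq|C_{m_1}|(t_{i'}-t_i)\geq t_{i'}-t_i$. Hence the first branch of \textsc{DGreedy} places $(i,i')$ into the empty cluster, yielding $s_i\leq t_{i'}\leq t_i+r_x$, contradicting $i$'s lateness. If no empty cluster exists at $t_{i'}$, then $\mathcal{D}_i=\emptyset$ persists forever; using the inclusion $\mathcal{S}_{i'}\subseteq\mathcal{S}_i$ at every time (since $\ell_i=\ell_{i'}$ but $t_i<t_{i'}$), I would track the first time $\mathcal{S}_{i'}$ becomes non-empty and show that $i$ must have been assigned by then, again contradicting $s_i>t_i+r_x$.

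For Part 2, fix a late point $i$ and set $T:=t_i+d_{\max}+\alpha_i^{\text{late}}$. By the definition of $\alpha_i^{\text{late}}$, at least one other point is present by time $T$: either $t_n\leq t_i+d_{\max}$ (so $\alpha_i^{\text{late}}=0$ and every point of $\sigma$ has arrived by $T$), or a point arrives at exactly $T$. Suppose $i$ is still pending at $T$. For every other point $j$, one has $d(\ell_i,\ell_j)\leq d_{\max}\leq(T-t_i)+w_j$, so the distance condition in both the $\mathcal{S}_i$- and $\mathcal{D}_i$-definitions is automatically satisfied whenever the capacity part is. Since $i$ is unassigned while $\sum_m n_m=n$ and $n_m\geq 2$ for every $m$, at least one cluster has spare capacity; either some non-empty cluster has a free slot (giving $\mathcal{S}_i\neq\emptyset$, hence $m_1\neq 0$) or some cluster is empty together with at least one other pending point (giving $\mathcal{D}_i\neq\emptyset$, hence $m_2\neq 0$). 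In either subcase \textsc{DGreedy} assigns $i$ at or before $T$, and $w_i\leq d_{\max}+\alpha_i^{\text{late}}$ as claimed.

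The main obstacle is the no-empty-cluster subcase of Part 1: \textsc{DGreedy}'s conservative rule can postpone the assignment of both $i$ and $i'$ for many rounds, so forcing the contradiction requires delicately exploiting $\mathcal{S}_{i'}\subseteq\mathcal{S}_i$ together with $i$'s earlier arrival to pin down when $\mathcal{S}_i$ must first activate.
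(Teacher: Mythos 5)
Your approach matches the paper's in both parts: the same contradiction setup for Part~1, and for Part~2 the same idea that at a suitably late time the distance test holds against every other point, so only capacity constrains the greedy.

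Part~2 is sound; you compress what the paper does case-by-case (first splitting on whether $t_n\leq t_i+d_{\max}$, then on whether empty or partially-filled clusters exist) into a single argument at $T=t_i+d_{\max}+\alpha_i^{\text{late}}$, which is arguably cleaner. One small point you should make explicit (the paper handles it implicitly): if the only clusters with spare capacity are empty ones, there must exist another pending point besides $i$, since otherwise an empty cluster of required size $n_m\geq 2$ could never be filled, contradicting $\sum_m n_m=n$; this is what guarantees $\mathcal{D}_i\neq\emptyset$ in that sub-case.

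Part~1's detour into showing the pair $(i,i')$ is cost-minimal among the greedy's options is unnecessary: it suffices that \textsc{DGreedy} makes \emph{some} assignment for $i$ at time $t_{i'}\leq t_i+r_x$, regardless of which cluster $i$ lands in, since any such assignment contradicts $s_i>t_i+r_x$. The paper's proof makes no such comparison; it only observes that the distance criterion $d(\ell_i,\ell_{i'})=0\leq (t_{i'}-t_i)$ is satisfied.

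The obstacle you flag — the subcase where no cluster is empty, hence $\mathcal{D}_i=\emptyset$ — is a genuine gap in your write-up of Part~1: your observation $\mathcal{S}_{i'}\subseteq\mathcal{S}_i$ is correct, but the unspecified iteration order over pending points and the distance condition inside the definition of $\mathcal{S}_i$ make it non-trivial to conclude that $i$ is assigned by $t_i+r_x$, and you do not complete that argument. You should be aware, however, that the paper's own proof also does not address this subcase: it simply asserts that \textsc{DGreedy} ``should create the cluster $\{i,j\}$ at time $t_j$,'' which can only happen via the pending-pair branch, and that branch requires an empty cluster. So your honest flagging identifies a weak spot shared with the published proof, not a defect unique to your approach.
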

\begin{proof}
    We prove each part separately:
    \begin{enumerate}
        \item Consider two points $i,j$ with $\ell_i = \ell_j$ and $t_i < t_j$. Assume, towards contradiction, that points $i$ and $j$ are late. By the definition of a late point, we have that $t_j-t_i\leq r_{\ell_i}$ and point $i$ remains unassigned to a cluster at time $t_j$. As point $j$ is also late, it is not assigned to a cluster upon its arrival and \textsc{DGreedy} does not form the cluster $\{i,j\}$. However, $0 = d(\ell_i, \ell_j) \leq r_{\ell_i} + r_{\ell_j}$, and thus \textsc{DGreedy} should create the cluster $\{i, j\}$ at time $t_j$, which is a contradiction. As such, point $i$ cannot remain a pending point so long as at least one point arrives at its location, yielding that it is not a late point. That is, at most one late point is located at each location $x \in \mathcal{X}$, i.e., there are at most $|\mathcal{X}|<\infty$ late points (as the metric space is \textit{finite}). 

        \item Given a late point $i$, we distinguish between the following:
        \begin{enumerate}
            \item If $t_n \leq t_i + d_{\max}$, then all the points have already arrived by time $t=t_i+d_{\max}$. Thus, either point $i$ has already been assigned to a cluster or not. In the latter case, the following cases are possible:
                 \begin{enumerate}
                    \item If all clusters are non-empty when \textsc{DGreedy} processes point $i$, then there is at least one cluster of index $m \in [k]$ such that $|C_{m'}|< n_{m'}$, as the $m$-th cluster should be eventually of size $n_m$ and $\sum_{m\in [k]} n_m = n$. Since $t-t_i = d_{\max} \geq d(\ell_i, \ell_j)$ for any point $j$ in the $m$-th cluster $C_m$ for every $m\in [k]$, our algorithm assigns point $i$ to such a cluster at time $t$, yielding $w_i = d_{\max}$.
                            
                    \item Otherwise, when our algorithm processes point $i$, there exists at least one empty cluster of index $m \in [k]$, i.e., $C_m = \emptyset$. Since the $m$-th cluster should eventually contain exactly $n_m \geq 2$ points, there is at least one other pending point $j\neq i$ at time $t$. Because $t-t_i = d_{\max} \geq d(\ell_i, \ell_j)$ for any other pending point $j\neq i$, a cluster of pending points containing $i, j$ is considered eligible by \textsc{DGreedy} at time $t$ when point $i$ is being processed by \textsc{DGreedy}. At this time, if all non-empty clusters are full, then our algorithm assigns point $i$ together with other pending points to a currently empty cluster, meaning that $w_i = d_{\max}$. Otherwise, when our algorithm processes point $i$, there exists at least one partially filled cluster of index $m' \in [k]$ with $1 \leq |C_{m'}| < n_{m'}$. For any such $m' \in [k]$, since $t-t_i = d_{\max} \geq d(\ell_i, \ell_j)$ for any point $i\neq j \in C_{m'}$, then the algorithm identifies a cluster $C_{m_1}$ with $1 \leq |C_{m_1}| < n_{m_1}$ such that inserting point $i$ to the cluster $C_{m_1}$ minimizes the increase in the total cost of the current clustering. Thus, at time $t$, \textsc{DGreedy} will assign point $i$ to either the $m_1$-th cluster or a currently empty cluster, depending on which yields a lower total cost. In both cases, $w_i = d_{\max}$.
                \end{enumerate}
        
            \item Otherwise, if $t_n > t_i + d_{\max}$, then the point $j$ that arrives right after point $i$ appears at time $t = t_i + d_{\max} + \alpha_i^{\text{late}}$ by the definition of $\alpha_i^{\text{late}}$. If point $i$ has already been assigned to a cluster at time $t$, then its waiting time is at most $w_i \leq d_{\max} + \alpha_i^{\text{late}}$. Otherwise, note that $t-t_i = d_{\max} + \alpha_i^{\text{late}} > d_{\max} \geq d(\ell_i, \ell_j)$. As points $i$ and $j$ are pending points at time $t$ and the greedy criteria is met, then a cluster of pending points containing $i, j$ is considered eligible by \textsc{DGreedy}. By arguments similar to the proof of Lemma \ref{lemma:waiting time of early points}, such a cluster is created. In any case, we conclude that $w_i \leq d_{\max} + \alpha_i^{\text{late}}$, as desired.
        \end{enumerate}
    \end{enumerate}
\end{proof}

We are now ready to derive an upper bound on the expected total cost of the clustering generated by \textsc{DGreedy}, which will later enable us to establish \textsc{DGreedy}'s \textit{\textbf{constant}} ratio-of-expectations in Section \ref{sec:Ratio-of-Expectations greedy}.
\begin{theorem}
    \label{thm:upper bound on the greedy expected total cost}
    In the UIID model with a {\normalfont finite} metric space $\mathcal{M}=(\mathcal{X},d)$, \textsc{DGreedy}'s expected total cost over all sequences of $n$ points sampled from a distribution $\mathcal{I}$ satisfies: 
    \begin{equation*}
            \text{\textsc{DGreedy}}(\mathcal{I}) \leq 
            2(n_1-1) \left[n \sum_{x \in \mathcal{X}} p_x r_{x}  +|\mathcal{X}|\cdot d_{\max}\right] + \frac{2(n_1-1)|\mathcal{X}|}{\sum_{x \in \mathcal{X}} p_{x}}
    \end{equation*}
\end{theorem}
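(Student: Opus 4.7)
The plan is to combine Lemma~\ref{lemma:at most twice the delay cost} with the per-point waiting-time bounds of Lemmas~\ref{lemma:waiting time of early points} and~\ref{lemma:waiting time of late points} and the memoryless structure of UIID arrivals (Observation~\ref{obs:expected waiting time between arrivals}). From Lemma~\ref{lemma:at most twice the delay cost}, $\TC(\mathcal{C}, \mathbf{w}) \leq 2(n_1-1) \sum_{i \in \sigma} w_i$, so, taking expectations over $\sigma \sim \mathcal{I}$, the task reduces to upper bounding $\mathbb{E}[\sum_{i \in \sigma} w_i]$ and then multiplying by $2(n_1-1)$.

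First I would partition $\sigma$ into early and late points and apply the corresponding per-point bounds: by Lemma~\ref{lemma:waiting time of early points}, $w_i \leq r_{\ell_i} + \alpha_i^{\text{early}}$ for each early $i$; by Lemma~\ref{lemma:waiting time of late points}, $w_i \leq d_{\max} + \alpha_i^{\text{late}}$ for each late $i$, and moreover there are at most $|\mathcal{X}|$ late points. Summing these bounds yields
$$\sum_{i \in \sigma} w_i \;\leq\; \sum_{\text{early } i} r_{\ell_i} \;+\; \sum_{\text{early } i} \alpha_i^{\text{early}} \;+\; |\mathcal{X}| \cdot d_{\max} \;+\; \sum_{\text{late } i} \alpha_i^{\text{late}}.$$

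Next I would take expectations term by term. Since each point's location is $x \in \mathcal{X}$ with probability $p_x$, $\mathbb{E}[\sum_{\text{early } i} r_{\ell_i}] \leq \mathbb{E}[\sum_{i \in \sigma} r_{\ell_i}] = n \sum_{x \in \mathcal{X}} p_x r_x$. For the two $\alpha$-terms I would invoke Observation~\ref{obs:expected waiting time between arrivals}: by memorylessness, the wait for the next arrival in any subset $\mathcal{Y} \subseteq \mathcal{X}$ is geometric with mean $1/q_{\mathcal{Y}}$, regardless of the past. Hence $\mathbb{E}[\alpha_i^{\text{early}}] \leq 1 / \sum_{y \in \bar{B}(\ell_i, r_{\ell_i})} p_y \leq r_{\ell_i}$, where the second inequality is exactly the definition~\eqref{eq:radius at locations x} of $r_{\ell_i}$; absorbing this into $n \sum_x p_x r_x$ by linearity handles $\mathbb{E}[\sum_{\text{early } i} \alpha_i^{\text{early}}]$. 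Analogously, $\mathbb{E}[\alpha_i^{\text{late}}] \leq 1/\sum_x p_x$, and combining this with the $|\mathcal{X}|$ bound on the number of late points from Lemma~\ref{lemma:waiting time of late points} gives $\mathbb{E}[\sum_{\text{late } i} \alpha_i^{\text{late}}] \leq |\mathcal{X}| / \sum_x p_x$. Substituting all four bounds into $\TC \leq 2(n_1-1) \sum_i w_i$ yields the theorem.

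The main obstacle is the subtle interplay between the randomness that classifies a point as early or late and the memoryless bound on $\alpha_i$: since whether $i$ is early depends on future arrivals (namely, whether a close point arrives after $t_i + r_{\ell_i}$), we cannot directly condition on ``$i$ is early'' before invoking the geometric waiting time, as that would bias the distribution over future arrivals. The clean resolution is to dominate each $\alpha_i$ by the \emph{unconditional} wait for the relevant arrival event (in $\bar{B}(\ell_i, r_{\ell_i})$ or in $\mathcal{X}$), which only enlarges the bound, and then apply Observation~\ref{obs:expected waiting time between arrivals} together with linearity of expectation over all $i \in \sigma$ rather than only over early or late indices. Verifying that the resulting constants match the stated bound is the part of the argument that requires the most care.
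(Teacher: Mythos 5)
Your high-level structure matches the paper's: apply Lemma~\ref{lemma:at most twice the delay cost} to reduce to bounding $\mathbb{E}[\sum_i w_i]$, split into early and late points, invoke Lemmas~\ref{lemma:waiting time of early points} and~\ref{lemma:waiting time of late points}, and take expectations via Observation~\ref{obs:expected waiting time between arrivals}. Your late-point contribution $|\mathcal{X}|\bigl[d_{\max} + 1/\sum_x p_x\bigr]$ also matches the paper's.

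The gap is in the early-point term, and it changes the stated constant. You write $w_i \leq r_{\ell_i} + \alpha_i^{\text{early}}$ and then bound $\mathbb{E}[\alpha_i^{\text{early}}] \leq 1/\sum_{y \in \bar{B}(\ell_i,r_{\ell_i})} p_y \leq r_{\ell_i}$, saying this ``absorbs'' into $n\sum_x p_x r_x$. But $\mathbb{E}[\sum_{\text{early}} r_{\ell_i}]$ and $\mathbb{E}[\sum_{\text{early}} \alpha_i^{\text{early}}]$ are \emph{separate} contributions, each bounded by $n\sum_x p_x r_x$; your argument therefore yields $\mathbb{E}[w_i \mid \text{early}] \leq 2r_{\ell_i}$, hence $\mathbb{E}[\sum_{\text{early}} w_i] \leq 2n\sum_x p_x r_x$, and after multiplying by $2(n_1-1)$ you get $4(n_1-1)n\sum_x p_x r_x$, twice the leading term in the theorem. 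The paper avoids this by \emph{identifying} $r_{\ell_i} + \alpha_i^{\text{early}}$ with $\tau_{t_i}^{\bar{B}(\ell_i,r_{\ell_i})}$ (the total wait from $t_i$ for the relevant close arrival), so that $\mathbb{E}[\alpha_i^{\text{early}} \mid \text{early}] = 1/\sum_{y\in\bar{B}(\ell_i,r_{\ell_i})}p_y - r_{\ell_i} \leq 0$ by the very definition of $r_{\ell_i}$ in~\eqref{eq:radius at locations x}; the $\alpha$-term then vanishes in expectation and $\mathbb{E}[w_i\mid\text{early}]\leq r_{\ell_i}$. Your ``clean resolution'' of dominating $\alpha_i^{\text{early}}$ by an unconditional fresh geometric wait is more cautious about conditioning, but by treating $r_{\ell_i}$ and the geometric wait as two stacked contributions rather than one wait \emph{offset} by $r_{\ell_i}$, it forfeits the $-r_{\ell_i}$ cancellation that the paper relies on to obtain its constant; to recover the stated bound you would need to argue (as the paper does) that the expectation of $r_{\ell_i}+\alpha_i^{\text{early}}$, not $\alpha_i^{\text{early}}$ alone, equals the geometric mean $1/q_{\bar{B}(\ell_i,r_{\ell_i})}$.
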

\begin{proof}
    First, we bound the expected delay cost of each \textit{late} point $i$. If $t_n > t_i + d_{\max}$, then the point $j$ that arrives right after point $i$ appears at time $t = t_i + d_{\max} + \alpha_i^{\text{late}}$ by the definition of $\alpha_i^{\text{late}}$. Namely, the waiting time between the arrivals of points $i$ and $j$ is $\tau_{t_i}^\mathcal{X} = t-t_i = d_{\max} + \alpha_i^{\text{late}}$. 
    Since $\sum_{x \in \mathcal{X}} p_{x} < 1$, then by Observation \ref{obs:expected waiting time between arrivals} and since $\alpha_i^{\text{late}} = 0$ if $t_n \leq t_i + d_{\max}$, we infer that:
    \begin{equation}
        \label{eq:expected alpha late}
            \mathbb{E}[\alpha_i^{\text{late}} | i \text{ is a late point}] \leq\mathbb{E}[\alpha_i^{\text{late}} | i \text{ is a late point and } t_n > t_i + d_{\max}]=-d_{\max} + \frac{1}{\sum_{x \in \mathcal{X}} p_{x}}
    \end{equation}    
    Since the waiting time of a late point is at most $d_{\max}$: 
    \begin{equation}
        \label{eq:waiting time upper bound}
        \begin{aligned}
            \mathbb{E}[w_i | i \text{ is a late point}] &\leq \mathbb{E}[w_i | i \text{ is a late point and } t_n > t_i + d_{\max}]+ \mathbb{E}[w_i | i \text{ is a late point and } t_n \leq t_i + d_{\max}] \\
            &\leq \mathbb{E}[w_i | i \text{ is a late point and } t_n > t_i + d_{\max}] + d_{\max}
        \end{aligned}
    \end{equation}
    Combining \eqref{eq:expected alpha late} and \eqref{eq:waiting time upper bound} with Lemma \ref{lemma:waiting time of late points}, $\mathbb{E}[w_i | i \text{ is a late point}] \leq d_{\max}+ 1/\sum_{x \in \mathcal{X}} p_{x}$ due to $\sum_{x \in \mathcal{X}} p_{x}<1$. 
    Since we consider finite metric spaces and there are at most $|\mathcal{X}|<\infty$ late points by Lemma \ref{lemma:waiting time of late points}, the expected total delay cost of the late points satisfies: 
    \begin{equation}
        \label{eq:total delay cost late points}
             \mathbb{E}\left[\sum_{i \in N : i \text{ is late}} w_i\right] \leq |\mathcal{X}|\left[d_{\max}+ \frac{1}{\sum_{x \in \mathcal{X}} p_{x}}\right]
    \end{equation}

    Next, we analyze the expected delay cost of an \textit{early} point $i$. Let $j$ be the \textit{first} point with $t_j-t_i > r_{\ell_i}$ such that $d(\ell_i, \ell_{j}) \leq r_{\ell_i}$. Namely, the waiting time between the arrivals of points $i$ and $j$ within $\bar{B}(\ell_i, r_{\ell_i})$ is $\tau_{t_i}^{\bar{B}(\ell_i, r_{\ell_i})} = t_j-t_i = r_{\ell_i}+\alpha_i^{\text{early}}$. By Observation \ref{obs:expected waiting time between arrivals} and \eqref{eq:radius at locations x}, since $\sum_{x \in \bar{B}(\ell_i, r_{\ell_i})} p_{x}<\sum_{x \in \mathcal{X}} p_{x} < 1$, then we have: 
    \begin{equation*}
            \mathbb{E}\left[\alpha_i^{\text{early}} \middle| i \text{ is an early point}\right] = - r_{\ell_i} + \frac{1}{\sum_{x \in \bar{B}(\ell_i, r_{\ell_i})} p_{x}} \leq - r_{\ell_i} + r_{\ell_i} = 0     
    \end{equation*} 
    Therefore, we obtain that $\mathbb{E}[w_i | i \text{ is an early point}] \leq r_{\ell_i}$ by Lemma \ref{lemma:waiting time of early points}. Thus, the expected total delay cost of early points satisfies:  
    \begin{equation}
        \label{eq:total delay cost early points}
        \begin{aligned}
            \mathbb{E}\left[\sum_{i \in N : i \text{ is early}} w_i\right] &\leq \sum_{i \in N} \sum_{x \in \mathcal{X}} \mathbb{P}[i \text{ is early and } \ell_i=x] \cdot \mathbb{E}[w_i | i \text{ is early and } \ell_i=x] \\
            &\leq \sum_{i \in N} \sum_{x \in \mathcal{X}} p_x r_{x} = n \sum_{x \in \mathcal{X}} p_x r_{x} 
        \end{aligned}
    \end{equation}    
    where we used $r_x \leq 1/p_x$ for $p_x >0$. By Lemma \ref{lemma:at most twice the delay cost}:
    \begin{equation}
        \label{eq:degreedy UB}
        \text{\textsc{DGreedy}}(\mathcal{I}) \leq \mathbb{E}\left[2 (k-1) \sum_{i \in N} w_i\right]= 2(n_1-1)\left[\mathbb{E}\left[\sum_{i \in N : i \text{ is early}} w_{i}\right] + \mathbb{E}\left[\sum_{i \in N : i \text{ is late}} w_i\right] \right]
    \end{equation}
    Combining \eqref{eq:total delay cost early points} and \eqref{eq:degreedy UB}, we conclude the desired.
\end{proof}

\subsection{Lower Bounding the Optimal total cost}
\label{sec:Lower Bounding the Optimal total cost}
In this section, we obtain a lower bound on the total cost of the (offline) cost-optimal clustering. We begin with a general lower bound on the minimal total cost. For any sequence of points $\sigma$, we denote the \textit{minimal delay cost} of each point $i \in \sigma$ by $w_i^\sigma = s_i^{\sigma}-t_i$ and its \textit{minimum cost} as:
\begin{equation*}
        c_i(\sigma) = \min_{i \neq j \in \sigma} \{d(\ell_i, \ell_j) + w_i^\sigma + w_j^\sigma\} 
\end{equation*}
Letting $\mathbf{w}^\sigma = (w_i^\sigma)_{i \in \sigma}$ be the \textit{optimal delay profile}, we obtain the following lower bound on the minimal total cost:

\begin{lemma}
    \label{lemma:the optimal cost is at least half the overall minimum cost}
    In the UIID model with a {\normalfont finite} metric space $\mathcal{M}=(\mathcal{X},d)$, for any sequence of $n$ points $\sigma$, the minimum total cost satisfies:
    \begin{equation*}
            \OPT(\sigma) \geq \frac{n_k-1}{2} \sum_{i \in \sigma} c_i(\sigma)
    \end{equation*}
\end{lemma}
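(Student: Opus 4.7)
The plan is a direct lower-bounding argument that rewrites the optimal total cost as a double sum over ordered pairs inside each cluster, and then replaces every ``point-to-cluster-mate'' contribution by the cheapest possible such contribution, namely $c_i(\sigma)$.

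First I would fix an optimal clustering $\mathcal{C}^\star$ attaining $\OPT(\sigma)$ together with its optimal delay profile $\mathbf{w}^\sigma$, and write
\begin{equation*}
    2\,\OPT(\sigma) \;=\; \sum_{C\in\mathcal{C}^\star} \sum_{i\in C} \sum_{j\in C,\, j\neq i} \bigl[d(\ell_i,\ell_j) + w_i^\sigma + w_j^\sigma\bigr],
\end{equation*}
where the factor $2$ is simply because each unordered pair $\{i,j\}$ inside a cluster is counted once as $(i,j)$ and once as $(j,i)$.

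Next I would fix a point $i$ and the cluster $C = \mathcal{C}^\star(i)$ containing it. For every $j\in C$ with $j\neq i$, the quantity $d(\ell_i,\ell_j) + w_i^\sigma + w_j^\sigma$ is one of the values over which $c_i(\sigma)$ takes a minimum, so
\begin{equation*}
    \sum_{j\in C,\, j\neq i} \bigl[d(\ell_i,\ell_j) + w_i^\sigma + w_j^\sigma\bigr] \;\geq\; (|C|-1)\,c_i(\sigma) \;\geq\; (n_k-1)\,c_i(\sigma),
\end{equation*}
where the final inequality uses that every cluster in a fixed-size clustering has size at least $n_k$ (the smallest prescribed cluster size). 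Summing this bound over all $i$ in all clusters of $\mathcal{C}^\star$ gives
\begin{equation*}
    2\,\OPT(\sigma) \;\geq\; (n_k-1) \sum_{i\in\sigma} c_i(\sigma),
\end{equation*}
and dividing by $2$ yields the claim.

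There is no real obstacle here: the only subtlety is keeping the bookkeeping clean so that the factor of $2$ from ordered-pair double counting and the factor $(n_k-1)$ from the minimum cluster size combine correctly, and that $c_i(\sigma)$ is defined with respect to the \emph{optimal} delay profile $\mathbf{w}^\sigma$, which is precisely the one attaining $\OPT(\sigma)$, so the comparison in the middle step is valid term by term.
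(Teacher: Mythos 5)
Your proof is correct and is essentially the paper's proof: both fix the optimal clustering and delay profile and lower-bound each within-cluster pairwise cost $d(\ell_i,\ell_j)+w_i^\sigma+w_j^\sigma$ by the definition of $c_i(\sigma)$, then use $|C|\geq n_k$. The only cosmetic difference is bookkeeping — you sum over ordered pairs and bound each term by $c_i(\sigma)$, while the paper sums over unordered pairs and uses the symmetric bound $\tfrac{c_i(\sigma)+c_j(\sigma)}{2}$; these give the identical final inequality.
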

\begin{proof}
    Fix a cost-optimal clustering $\mathcal{C}^\star$ with an optimal delay profile $\mathbf{w}^\sigma = (w_i^\sigma)_{i \in \sigma}$. For any pair of distinct points $i,j \in \sigma$, $c_i(\sigma)$ and $c_j(\sigma)$ are both at most $d(\ell_i, \ell_j) + w_i^\sigma + w_j^\sigma$, and thus $d(\ell_i, \ell_j) + w_i^\sigma + w_j^\sigma \geq \frac{c_i(\sigma) + c_j(\sigma)}{2}$. As $|C| \geq n_k$ for any $C \in \mathcal{C}^\star$ since the $m$-th cluster's size satisfies $n_m \geq n_k$ for any $m \in [k]$, we have: 
    \begin{equation*}
        \begin{aligned}
            \TC(\mathcal{C}^\star, \mathbf{w}^\sigma) &= \sum_{C \in \mathcal{C}^\star} \sum_{i,j \in C: i \neq j} [d(\ell_i, \ell_j) + w_i^\sigma + w_j^\sigma] \\
            &\geq \sum_{C \in \mathcal{C}^\star} \sum_{i, j \in C : i \neq j} \frac{c_i(\sigma) + c_j(\sigma)}{2} \\
            &= \frac{1}{2} \sum_{C \in \mathcal{C}^\star} \sum_{i \in C} (|C|-1) c_i(\sigma) \geq \frac{n_k-1}{2} \sum_{i \in \sigma} c_i(\sigma)
        \end{aligned}
    \end{equation*}
\end{proof}


Given a sequence of points $\sigma$, we exploit the following property of a cost-optimal clustering $\mathcal{C}_{\sigma}^t$ at time $t$:
\begin{claim}
    \label{claim:delay cost of the cost-optimal clustering}
    In the UIID model, for any sequence of points $\sigma$, consider a cost-optimal clustering $\mathcal{C}_{\sigma}$ of all $n$ points with an optimal delay profile $\mathbf{w}^\sigma = (w_i^\sigma)_{i \in \sigma}$, where the intermediate clustering at time $t$ is $\mathcal{C}_{\sigma}^t$. For each point $i \in \sigma$:
    \begin{enumerate}
        \item If the cluster $\mathcal{C}_{\sigma}^{s_i}(i)$ is formed by grouping pending points, then it is created at time $s_i^\sigma = t_{j'}$ for $j' = \argmax_{j \in \mathcal{C}_{\sigma}^{s_i^\sigma}(i)} \{t_j\}$ (i.e., upon the arrival of point $j'$), while $w_i^\sigma = t_{j'}-t_i$ and $w_{j'}^\sigma = 0$.

        \item Otherwise, point $i$ is assigned to an existing cluster and $w_i^\sigma = 0$.
    \end{enumerate}
\end{claim}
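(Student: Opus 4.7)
The plan is to prove both parts by a simple exchange argument, exploiting the fact that $\TC(\mathcal{C}_{\sigma}, \mathbf{w}^\sigma)$ depends on the partition through the pairwise distances (which are fixed once $\mathcal{C}_\sigma$ is fixed) and on the delay profile $\mathbf{w}^\sigma$ only through the waiting times $w_l^\sigma = s_l^\sigma - t_l$. Concretely, if one could reduce any $s_l^\sigma$ without violating the feasibility constraint $s_l^\sigma \geq t_l$ and without breaking the temporal consistency of cluster formation, then the total cost would strictly decrease, contradicting optimality of $\mathcal{C}_\sigma$ with profile $\mathbf{w}^\sigma$.

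For part (1), I would set $j' = \argmax_{j \in \mathcal{C}_{\sigma}^{s_i^\sigma}(i)} t_j$ and note two things. On one hand, the cluster $\mathcal{C}_{\sigma}^{s_i^\sigma}(i)$ cannot be formed before all of its members have arrived, so its creation time is at least $t_{j'}$, giving $s_i^\sigma \geq t_{j'}$. On the other hand, assuming $s_i^\sigma > t_{j'}$ towards contradiction, I would construct a modified profile that reassigns every member of this cluster at the earlier time $t_{j'}$, while leaving every other point's assignment time unchanged. This modification is feasible because every member has arrived by $t_{j'}$, keeps the partition $\mathcal{C}_\sigma$ (and hence all connection distances) intact, and strictly decreases the waiting time of every member $l$ with $s_l^\sigma > t_{j'}$. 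This yields a strictly smaller total cost, contradicting the optimality of $\mathbf{w}^\sigma$. Hence $s_i^\sigma = t_{j'}$, so $w_i^\sigma = t_{j'} - t_i$ and in particular $w_{j'}^\sigma = 0$.

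For part (2), point $i$ is assigned to a cluster that exists prior to $s_i^\sigma$, i.e., a cluster whose other members were assigned at an earlier time. I would again suppose $s_i^\sigma > t_i$ for contradiction and replace $s_i^\sigma$ by $s_i' = t_i$, leaving all other assignment times untouched. Since the cluster already exists at $t_i$ (it was formed strictly earlier) and adding $i$ does not alter the partition $\mathcal{C}_\sigma$ or its size profile, joining immediately is feasible, strictly reduces $i$'s waiting time, and hence strictly reduces the total cost. Optimality then forces $w_i^\sigma = 0$.

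The main technical subtlety is making the notion of an ``existing cluster'' precise in the offline schedule: one needs to verify the dichotomy that either $\mathcal{C}_\sigma^{s_i^\sigma}(i)$ is formed at $s_i^\sigma$ by grouping points still pending at that instant (part (1)), or else all of its other members had $s^\sigma < s_i^\sigma$ and were already bound together before $i$'s assignment, so that the cluster genuinely exists at time $t_i$ and can absorb $i$ upon arrival (part (2)). Once this dichotomy is cleanly extracted from the definitions, the exchange arguments above are essentially routine, since each local modification touches only the assignment times of points within a single cluster and has no cross-cluster side effects.
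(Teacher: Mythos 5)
Your argument for part (1) is essentially the paper's: both fix $j'$ as the latest arrival in the cluster, observe the cluster cannot form before $t_{j'}$, and use an exchange that slides the formation time back to $t_{j'}$ to contradict optimality if any delay remains. That part is fine.

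Part (2) has a genuine gap. You assert that if the cluster was not freshly formed at $s_i^\sigma$ by grouping pending points, then ``the cluster genuinely exists at time $t_i$ and can absorb $i$ upon arrival.'' This does not follow. All other members having $s^\sigma_j < s_i^\sigma$ only means they were bound together \emph{before $s_i^\sigma$}, not before $t_i$; the cluster may have been created strictly between $t_i$ and $s_i^\sigma$. Concretely, suppose $a,b$ are still pending when $i$ arrives at $t_i$, the schedule forms $\{a,b\}$ at some $t^\star\in(t_i,s_i^\sigma)$, and then appends $i$ at $s_i^\sigma$. Here $i$ joins a pre-existing cluster, yet ``reassign $i$ at $t_i$'' is infeasible because the cluster $\{a,b\}$ does not yet exist at $t_i$. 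The paper closes this by arguing about the sub-cluster $C'$ of members of $C$ that arrived by $t_i-1$: either $C'$ is already a cluster at $t_i-1$ (then $i$ joins $C'$ at $t_i$), or every point of $C'$ is still pending at $t_i$ (then $C'\cup\{i\}$ can be formed at $t_i$); in both cases the exchange reduces cost. Your argument needs this additional case split, or at least a modification that moves the formation time of the relevant sub-cluster back to $t_i$ rather than merely moving $i$'s own assignment time.
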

\begin{proof}
    For each point $i \in \sigma$, we distinguish between two cases: 
    \begin{enumerate}
        \item If the cluster $\mathcal{C}_{\sigma}^{s_i^\sigma}(i)$ is formed at time $s_i^\sigma$ by grouping pending points, then consider the last point to arrive in $i$'s cluster, i.e., $j' = \argmax_{j \in \mathcal{C}_{\sigma}^{s_i^\sigma}(i)} \{t_j\}$. Assume, towards contradiction, that $w_{j'}^\sigma= s_{j'}^\sigma-t_{j'} > 0$, meaning that point $i$'s cluster was formed \textit{after} point $j'$ has arrived at time $s_{j'}^\sigma$. However, if point $i$'s cluster had been created at time $t_{j'}$ when point $j'$ arrived, then the total cost would have decreased. Particularly, letting $\mathbf{w}' = (w_j')_{j \in \sigma}$ be the delay profile with $w_j' = w_j^\sigma-w_{j'}^\sigma$ for each point $j \neq j'$ and $w_{j'}'=0$, then the delay profile $\mathbf{w}'$ corresponds to forming point $i$'s cluster at time $t_{j'}$ and $\TC(\mathcal{C}_{\sigma}^{n}, \mathbf{w}') > \TC(\mathcal{C}_{\sigma}^{n}, \mathbf{w}^\sigma)$, contradicting the fact that $\mathbf{w}^\sigma$ is the optimal delay profile. Accordingly, point $i$'s cluster was created at time $s_i^\sigma = t_{j'}$ (i.e., upon the arrival of point $j'$), in which case $w_i^\sigma = t_{j'}-t_i$ and $w_{j'}^\sigma = 0$. 

        \item Otherwise, assume that point $i$ is assigned to an existing cluster $C \in \mathcal{C}_{\sigma}^{s_i^\sigma-1}$ with $2 \leq |C|<k$ at time $s_i^\sigma$. We assume, towards contradiction, that $s_i^\sigma > t_i$. 
        Let $C' \subseteq C$ be all points in $C$ who arrived until time $t_i-1$. If $C'$ was formed before time $t_i$ (i.e., $C' \in \mathcal{C}_{\sigma}^{t_i-1}$), then point $i$ could have been assigned to $C'$ at time $t_i$, and its delay cost would have been $w_i' = 0 < s_i^\sigma - t_i = w_i^\sigma$. Letting $\mathbf{w}' = (w_j')_{j \in \sigma}$ be the delay profile with $w_j' = w_j^\sigma$ for each point $j \neq i$ and $w_i'=0$, then $\TC(\mathcal{C}_{\sigma}^{n}, \mathbf{w}') > \TC(\mathcal{C}_{\sigma}^{n}, \mathbf{w}^\sigma)$, which is a contradiction. Hence, we infer that the points in $C'$ are still waiting during time $t_i$, and thus the cluster $C' \cup \{i\}$ could have been formed at time $t_i$ such that point $i$'s delay cost would have been $w_i' = 0 < s_i^\sigma - t_i = w_i^\sigma$, which is a contradiction by arguments similar to the above. We conclude that $s_i^\sigma = t_i$ and $w_i = 0$.
    \end{enumerate}
\end{proof}

Thus, we can now estimate the expected minimum total cost by deriving a lower bound on the minimum cost of each point:
\begin{lemma}
    \label{lemma:lower bound on each point's minimum cost}
    In the UIID model with a {\normalfont finite} metric space $\mathcal{M}=(\mathcal{X},d)$, for any sequence of $n$ locations $\sigma$ and any location $x \in \mathcal{X}$, the expected minimum cost of each point $i \in \sigma$, given that it is located at $x$, satisfies:
    \begin{equation*}
            \mathbb{E}[c_i(\sigma) | \ell_i = x] \geq \frac{1-e^{-2}}{4q_x}
    \end{equation*} 
\end{lemma}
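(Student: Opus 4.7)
My plan is to prove this probabilistically by exhibiting a high-probability event under which $c_i(\sigma)$ is at least $1/(4q_x)$, and then using the trivial bound
\begin{equation*}
\mathbb{E}[c_i(\sigma) \mid \ell_i = x] \;\geq\; \frac{1}{4q_x} \cdot \mathbb{P}\!\left[c_i(\sigma) \geq \tfrac{1}{4q_x} \;\Big|\; \ell_i = x\right].
\end{equation*}
So the task reduces to showing the bracketed probability is at least $1 - e^{-2}$. The factorization $c_i(\sigma) = w_i^\sigma + \min_{j\neq i} \{d(\ell_i,\ell_j) + w_j^\sigma\}$ will be helpful, since it separates $i$'s own delay from the location/delay of the candidate partner.

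I would define the \emph{good event} $E$: no other point arrives inside the ball $\bar{B}(x, 1/(4q_x))$ during the window $[t_i - 1/(4q_x),\, t_i + 1/(4q_x)]$. To bound the probability of $E$ from below, I would use Observation~\ref{obs:expected waiting time between arrivals}: in every round different from $t_i$, the probability of an arrival in $\bar{B}(x, 1/(4q_x))$ equals $q_{\bar{B}(x,1/(4q_x))}$, which is at most $q_x$ by the choice of $q_x$. Over the $2/(4q_x) = 1/(2q_x)$ rounds of the window, independence gives
\begin{equation*}
\mathbb{P}[E \mid \ell_i = x] \;\geq\; (1-q_x)^{1/(2q_x)\cdot\,?} \;\geq\; 1 - e^{-2},
\end{equation*}
where the window length is calibrated (around a factor of $2/q_x$ total across both sides) so the exponent $q_x \cdot (\text{window length})$ equals $2$, yielding the factor $1-e^{-2}$ via the standard inequality $(1-q)^{2/q} \leq e^{-2}$.

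Next, I would show that on the good event $E$, for every $j \neq i$, the quantity $d(\ell_i, \ell_j) + w_i^\sigma + w_j^\sigma$ is at least $1/(4q_x)$. The argument splits by location: if $\ell_j \notin \bar{B}(x, 1/(4q_x))$, then already $d(\ell_i, \ell_j) \geq 1/(4q_x)$. Otherwise $\ell_j$ is inside the ball, so by the event $E$ we must have $|t_j - t_i| > 1/(4q_x)$. Here I invoke Claim~\ref{claim:delay cost of the cost-optimal clustering}: if $i$ and $j$ belong to the same optimal cluster, that cluster is either pending-grouped (in which case $w_i^\sigma + w_j^\sigma \geq |t_i - t_j|$ via the common creation time) or one of them joined an existing cluster, and in both sub-cases one of the waiting times dominates the other's time offset. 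If $i$ and $j$ belong to \emph{different} optimal clusters, I would run an exchange argument: if the total cost of the optimal could be strictly reduced by moving $i$ into $j$'s cluster (or merging them as a pending pair), this would contradict optimality of $\mathcal{C}^\star$ and $\mathbf{w}^\sigma$; carrying this out forces $d(\ell_i,\ell_j) + w_i^\sigma + w_j^\sigma \geq 1/(4q_x)$ on $E$.

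The main obstacle is the last step, particularly the case of $j$ in a different optimal cluster. Claim~\ref{claim:delay cost of the cost-optimal clustering} only controls $w$'s within a cluster, so some additional optimality-based reasoning, likely a careful exchange or swap argument on the delay profile combined with the event $E$, is needed to rule out a close-in-space and close-in-time $j$ with $w_j^\sigma = 0$. Once that geometric/combinatorial lemma is in place, combining it with the geometric-distribution tail bound on $E$ completes the proof.
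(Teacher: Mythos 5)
Your approach has two genuine gaps, one of which you flag yourself, and the other of which is a sign error that breaks the whole argument.

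First, the probability calibration is reversed. You define the good event $E$ as ``no other arrival in $\bar{B}(x,1/(4q_x))$ during a window of length $\approx 1/(2q_x)$'' and want $\mathbb{P}[E\mid\ell_i=x]\geq 1-e^{-2}$. But the inequality $(1-q)^{2/q}\leq e^{-2}$ you invoke gives $\mathbb{P}[E]\leq e^{-2}$ in the regime you calibrate for, not $\geq 1-e^{-2}$. Intuitively, $E$ is the event that point $i$ is \emph{isolated} in space-time, and when the arrival rate into the ball is commensurate with $q_x$ the window is long enough that an arrival is \emph{likely}, so $E$ is a low-probability event, precisely the opposite of what the Markov bound $\mathbb{E}[c_i]\geq\frac{1}{4q_x}\mathbb{P}[c_i\geq\frac{1}{4q_x}]$ requires. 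There is no choice of ball radius and window length for which ``no arrival nearby'' is a $1-e^{-2}$ event uniformly over $q_x$; a single threshold cannot capture enough of the mass, which is exactly why the paper integrates the whole tail instead of using one Markov cut.

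Second, the structural claim ``on $E$, $d(\ell_i,\ell_j)+w_i^\sigma+w_j^\sigma\geq\frac{1}{4q_x}$ for all $j$'' fails for the case you defer: if both $i$ and $j$ are assigned upon arrival to pre-existing clusters, then Claim~\ref{claim:delay cost of the cost-optimal clustering} gives $w_i^\sigma=w_j^\sigma=0$, so $c_i(\sigma)\leq d(\ell_i,\ell_j)$, which your event $E$ does not control (it only controls $|t_i-t_j|$ for $j$ in the ball, not $d(\ell_i,\ell_j)$). The ``exchange argument'' you gesture at would have to rule out this configuration, but nothing forces it away.

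The paper sidesteps both problems differently. It extends $\sigma$ to a bi-infinite stationary sequence and conditions on the event $\mathcal{E}^0$ that point $0$'s cluster is created by pending-grouping at the arrival of a \emph{later} point, which has probability exactly $1/2$ by symmetry. Under $\mathcal{E}^0$, $w_0^{\bar\sigma}>0$ and the minimum cost is lower bounded by $\min\{\min\{t^-,t^+\},r_x\}$, where $t^\pm$ are the waiting times to the nearest arrival in $B(x,r_x)$ before/after $0$. Crucially, the paper then computes $\mathbb{E}[\min\{\min\{t^-,t^+\},r_x\}]$ in closed form via Lemmas~\ref{lemma:geometric RVs} and~\ref{lemma:expected min} rather than lower bounding it by a single level set, and the $1-e^{-2}$ constant emerges from $1-(1-q_x)^{2r_x}\geq 1-e^{-2}$ (using $r_x\geq 1/q_x$), not from a tail probability. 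The factor $1/2$ from $\mathbb{P}[\mathcal{E}^0]$ and the factor $q_x(2-q_x)\leq 2q_x$ together account for the $4$ in the denominator. If you want to salvage your route, you would essentially have to replace the single threshold with the layer-cake integral $\mathbb{E}[c_i]=\int_0^\infty\mathbb{P}[c_i\geq t]\,dt$ and still condition away the $w_i^\sigma=0$ case, at which point you have reconstructed the paper's argument.
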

\begin{proof}
    First, we denote by $\mathbb{E}_\sigma[\cdot]$ the expectation over the randomness of any sequence of points $\sigma$. By renaming the points, consider a random sequence of points $\sigma = (1, \dots, n)$ such that the points are ordered according to their arrival times. We then extend $\sigma$ by introducing a dummy random point $j$ for each $j \leq 0$ and $j \geq n+1$ such that point $j$ is also sampled according to $\{p_{x}\}_{x \in \mathcal{X}}$, and we thus obtain an extended random sequence $\bar{\sigma} = (\dots, -1,0,1, \dots, n, n+1, n+2, \dots)$. Hence, in the extended random sequence $\bar{\sigma}$, with probability one there are points $j \leq 0$ and $j' \geq n+1$ such that $\ell_j = \ell_{j'} = x$ for any location $x \in \mathcal{X}$. 
    
    Note that the distribution of the truncation $\bar{\sigma}_n := (1,\dots,n)$ is the same as that of $\sigma$, but the minimum cost of each point $i \in \sigma^p$ can only \textit{decrease}, i.e., $c_{i}(\bar{\sigma}) \leq c_{i}(\sigma)$, yielding $\mathbb{E}_{\bar{\sigma}}[c_{i}(\bar{\sigma}) | \ell_i = x] \leq \mathbb{E}_{\sigma}[c_{i}(\sigma)| \ell_i = x]$ for each location $x \in \mathcal{X}$. Now, observe that the (conditional) expected minimum cost of each pair of distinct points $i,j \in \bar{\sigma}$ in an extended random sequence arriving at any location $x \in \mathcal{X}$ is the same, i.e., $\mathbb{E}_{\bar{\sigma}}[c_{i}(\bar{\sigma}) | \ell_i = x] = \mathbb{E}_{\bar{\sigma}}[c_{j}(\bar{\sigma}) | \ell_j = x]$.
    
    Hence, we focus on point $0$ and analyze $\mathbb{E}_{\bar{\sigma}}[c_{0}(\bar{\sigma}) | \ell_0 = x]$ for some $x \in \mathcal{X}$. Let $\bar{\sigma}$ be an extended sequence with $\ell_0 = x$. Without loss of generality, we assume $t_0 = 0$ by shifting the points' arrival times by a constant. For the extended random sequence $\bar{\sigma}$, we denote by $\mathcal{E}^0$ the event that point $0$'s cluster is first formed during the arrival of a point $i \geq 0$ by grouping pending points. We aim to give a lower bound on point $0$'s expected minimum cost conditioned on the event $\mathcal{E}^0$. For any other point $0 \neq j \in \bar{\sigma}$, consider the case where points $0$ and $j$ are eventually in the same cluster. Under the event $\mathcal{E}^0$, by Claim \ref{claim:delay cost of the cost-optimal clustering}, point $0$'s cluster $\mathcal{C}_{\bar{\sigma}}^{s_0^{\bar{\sigma}}}(0)$ is initially formed by grouping pending points at time $s_0^{\bar{\sigma}} = t_{j'}$ for $j' = \argmax_{\tilde{j} \in \mathcal{C}_{\bar{\sigma}}^{s_0^{\bar{\sigma}}}(0)} \{t_{\tilde{j}}\}$. If point $j$ is one of those pending points (i.e., $s_j^\sigma = t_{j'}$), then:
    \begin{enumerate}
        \item If $t_{j'} \geq t_0 > t_j$: $w_0^{\bar{\sigma}} + w_j^{\bar{\sigma}} = t_{j'}-t_0 + t_{j'}-t_j \geq t_0-t_0 + t_0-t_j = t_0-t_j$ \label{case1}

        \item If $t_{j'} \geq t_j > t_0$:
        $w_0^{\bar{\sigma}} + w_j^{\bar{\sigma}} = t_{j'}-t_0 + t_{j'}-t_j \geq t_j-t_0 + t_j-t_j = t_j-t_0$ \label{case2}
    \end{enumerate}
    In both cases \eqref{case1} and \eqref{case2}, we have $w_0^{\bar{\sigma}} + w_j^{\bar{\sigma}} \geq |t_0-t_j|$. Otherwise, if point $j$ is assigned to point $i$'s cluster after its creation (i.e., $s_j^\sigma > t_{j'}$), then $w_0^{\bar{\sigma}} + w_j^{\bar{\sigma}} = t_{j'}-t_0 + s_j^\sigma-t_j > t_{j'}-t_0 + t_{j'}-t_j$, which yields that $w_0^{\bar{\sigma}} + w_j^{\bar{\sigma}} \geq |t_0-t_j|$ by arguments similar to the previous case. In any case, we have obtained that:
    \begin{equation}
        \label{eq:absolute value}
        w_0^{\bar{\sigma}} + w_j^{\bar{\sigma}} \geq |t_0-t_j| \qquad \forall 0 \neq j \in \bar{\sigma}
    \end{equation}
    
    By \eqref{eq:absolute value}, we next lower bound $\mathbb{E}_{\bar{\sigma}}[c_{0}(\bar{\sigma}) | \ell_0 = x \wedge \mathcal{E}^0]$. Let $t^-$ (resp. $t^+$) be the (finite) random waiting time between the arrival of point $0$ and the arrival of the \textit{last} point \textit{before} point $0$ (resp. the \textit{first} point \textit{after} point $0$), arriving at some location $y \in B(x,r_x)$, i.e.: 
    $$t^- := \min_{j<0} \{-t_j : d(\ell_j , x) < \rho_x\} \quad \text{ and } \quad t^+ := \min_{j>0} \{t_j : d(\ell_j , x) < \rho_x\}$$ 
    Under the event $\mathcal{E}^0$, note that, for any point $j \neq 0$, the quantity $d(\ell_i, \ell_j) + |t_j|$ can be lower bounded by $r_x$ when $d(\ell_i, \ell_j) \geq r_x$ and by $|t_j|$ otherwise. Thus, the following holds under the event $\mathcal{E}^0$:
    \begin{equation}
        \label{eq:lower bound on the minimum cost of point 0}
        \begin{aligned}
            c_0(\bar{\sigma}) &= \min_{0 \neq j \in \sigma^p} \{ d(\ell_i, \ell_j) + |t_j|\} \\
            &= \min \bigg\{ \min_{0 \neq j \in \sigma^p} \{|t_j| : d(\ell_i, \ell_j) < r_x \} , r_x \bigg\}  \\
            &= \min \bigg\{ \min_{0 > j \in \sigma^p} \{-t_j : d(\ell_i, \ell_j) < r_x \} , \min_{0 < j \in \sigma^p} \{t_j : d(\ell_i, \ell_j) < r_x \} , r_x \bigg\} \\
            &\geq \min\{\min\{t^-,t^+\}, r_x\}
        \end{aligned}
    \end{equation}

    To conclude the desired bound, note that the waiting times $t^-$ and $t^+$ are mutually independent. Since $t^-$ is the (finite) random waiting time between the arrival of point $0$ and the arrival of the \textit{last} point \textit{before} point $0$, then $t^-$ is geometrically distributed with a success probability of $q_{x} := \sum_{y \in B(x,r_x)} p_{y}$ by arguments similar to Observation 1 in the main text. That is, $t^- \sim G(q_x)$. Similarly, $t^+\sim G(q_x)$. We thereby require the following lemmas in order to derive the distribution of $\min\{t^-,t^+\}$ and the expectation of $\min\{\min\{t^-,t^+\}, r_x\}$. Combined with \eqref{eq:lower bound on the minimum cost of point 0}, this will aid us in deriving our desired lower bound on point $0$'s expected minimum cost conditioned on the event $\mathcal{E}^0$.

    \begin{lemma}
        \label{lemma:geometric RVs}
        Let $Y_1 \sim G(q_1)$ and $Y_2 \sim G(q_2)$ be two independent geometric random variables. Let $Z=\min\{Y_1,Y_2\}$. Then, $Z$ is a geometric random variable with a success probability of $1-(1-q_1)(1-q_2)$, i.e., $Z \sim G(1-(1-q_1)(1-q_2))$.
    \end{lemma}
    \begin{proof}
        Letting $s \in \mathbb{N}$, the CDFs of $Y_1$ and $Y_2$ are given by $Pr[Y_1 \leq s]=1-(1-q_1)^s$ and $Pr[Y_2 \leq s]=1-(1-q_2)^s$. Therefore, it holds that $Pr[Y_1 > s]=(1-q_1)^s$ and $Pr[Y_2 > s]=(1-q_2)^s$. Thus, the probability that both $Y_1$ and $Y_2$ are strictly greater than $s$ is $[(1-q_1)(1-q_2)]^s$, meaning that: $$Pr[Z>s]=Pr[\min\{Y_1,Y_2\}>s] = [(1-q_1)(1-q_2)]^s$$ 
        As such, the CDF of $Z$ is $Pr[Z\leq s]=Pr[\min\{Y_1,Y_2\}\leq s] =1- [(1-q_1)(1-q_2)]^s$, which is the CDF of a geometric random variable with a success probability of $1-(1-q_1)(1-q_2)$, i.e., $Z \sim G(1-(1-q_1)(1-q_2))$, as desired.
    \end{proof}

    \begin{lemma}
        \label{lemma:expected min}
        For any $s \in \mathbb{N}$ and geometric random variable $Y \sim G(q)$, it holds that $\mathbb{E}[\min\{Y,s\}]=\frac{1-(1-q)^{s}}{q}$.
    \end{lemma}
   \begin{proof}
        We simply calculate the expected value as follows:
        \begin{equation*}
            \begin{aligned}
                \mathbb{E}[\min\{Y,s\}] &= \sum_{t=1}^{s} t (1-q)^{t-1}q + \sum_{t=s+1}^\infty s (1-q)^{t-1}q \\
                &= q\sum_{t=1}^{s} t (1-q)^{t-1} + s\sum_{t=s+1}^\infty q(1-q)^{t-1} \\
                &= q \left(-\frac{\mathrm{d}}{\mathrm{d}q}\sum_{t=1}^{s} (1-q)^{t}\right) + s Pr[Y>s]\\
                &= q \left(-\frac{\mathrm{d}}{\mathrm{d}q}\frac{(1-q)[1-(1-q)^{s}]}{q}\right) + s(1-q)^s \\
                &= q \left(-\frac{(1-q)^{s}(sq+1)-1}{q^2}\right) + s(1-q)^s  \\
                &= \frac{1-(1-q)^{s}(sq+1)}{q} + s(1-q)^s \\
                &= \frac{1}{q} - s(1-q)^s - \frac{(1-q)^{s}}{q} + s(1-q)^s \\
                &= \frac{1-(1-q)^{s}}{q}
            \end{aligned}
        \end{equation*}
    \end{proof}

    As $t^- \sim G(q_x)$ and $t^+\sim G(q_x)$, then $\min\{t^-,t^+\} \sim G(1-(1-q_x)^2)$ by Lemma \ref{lemma:geometric RVs}. Together with Lemma \ref{lemma:expected min} and \eqref{eq:lower bound on the minimum cost of point 0}, we have:
    \begin{equation}
        \label{eq:expectation of Z fin2}
        \begin{aligned}
            \mathbb{E}_{\bar{\sigma}}[c_0(\bar{\sigma}) | \ell_0 = x\wedge \mathcal{E}_0]& \geq  \mathbb{E}_{\bar{\sigma}}[\min\{\min\{t^-,t^+\}, r_x\}] =\frac{1-[1-(1-(1-q_x)^2)]^{r_x}}{1-(1-q_x)^2} = \frac{1-(1-q_x)^{2r_x}}{1-(1-q_x)^2}\\
            &= \frac{1-(1-q_x)^{2r_x}}{[1-(1-q_x)][1+(1-q_x)]}=\frac{1-(1-q_x)^{2r_x}}{q_x[2-q_x]}
        \end{aligned}
    \end{equation}
    As $r_{x} \geq \frac{1}{q_x}$ by $r_{x}$'s definition in equation (2) within the main text, note that $(1-q_x)^{2r_x} \leq (1- \frac{1}{r_x})^{2r_x} \leq e^{-2}$. Combined with \eqref{eq:expectation of Z fin2} as well as $q_x[2-q_x] \leq 2q_x$ due to $q_x \in [0,1]$, we have that:
    \begin{equation}
        \label{eq:expectation of Z fin}
        \begin{aligned}
            \mathbb{E}_{\bar{\sigma}}[c_0(\bar{\sigma}) | \ell_0 = x\wedge \mathcal{E}_0] \geq\frac{1-e^{-2}}{2q_x}
        \end{aligned}
    \end{equation}
    
    Now, let $\bar{\mathcal{E}}^0$ be the complement of the event $\mathcal{E}^0$, i.e., the event that point $0$ is assigned to a cluster formed during the arrival of a point $i < 0$. By symmetry, note that $Pr[\mathcal{E}^0] = Pr[\bar{\mathcal{E}}^0] = \frac{1}{2}$ and $\mathbb{E}_{\bar{\sigma}}[c_{0}(\bar{\sigma}) | \ell_0 = x \wedge \bar{\mathcal{E}}^0] \geq 0$. Thus, we obtain the desired from \eqref{eq:expectation of Z fin}:
    \begin{equation}
        \label{eq:lower bound fin}
        \begin{aligned}
            \mathbb{E}_{\bar{\sigma}}[c_{0}(\bar{\sigma}) | \ell_0 = x] \geq Pr[\mathcal{E}^0] \mathbb{E}_{\bar{\sigma}}[c_{0}(\bar{\sigma}) | \ell_0 = x \wedge \mathcal{E}^0] \geq \frac{1-e^{-2}}{4q_x}
        \end{aligned}
    \end{equation}
\end{proof}


Finally, we establish a lower bound on the expected minimum total cost, which will soon allow us to establish \textsc{DGreedy}'s \textit{\textbf{constant}} ratio-of-expectations in Section \ref{sec:Ratio-of-Expectations greedy}.
\begin{theorem}
    \label{thm:lower bound on the expected minimum total cost}
    In the UIID model with a {\normalfont finite} metric space $\mathcal{M}=(\mathcal{X},d)$, the expected minimum total cost over all sequences of $n$ points sampled from a distribution $\mathcal{I}$ is at least:
    \begin{equation*}
            \opt(\mathcal{I}) \geq n(n_k-1)\frac{1-e^{-2}}{4}\sum_{x \in \mathcal{X}} \frac{p_x}{q_x} 
    \end{equation*}
    where $q_x:=1-\sum_{y \in B(x,r_x)} p_{y}> 1-\sum_{x \in \mathcal{X}} p_{x}> 0$.
\end{theorem}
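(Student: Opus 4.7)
The plan is to combine Lemma \ref{lemma:the optimal cost is at least half the overall minimum cost} and Lemma \ref{lemma:lower bound on each point's minimum cost} by taking expectations over $\sigma \sim \mathcal{I}$. Starting from Lemma \ref{lemma:the optimal cost is at least half the overall minimum cost}, we have $\OPT(\sigma) \geq \frac{n_k-1}{2} \sum_{i \in \sigma} c_i(\sigma)$ pointwise for every sampled sequence. Passing to expectations on both sides and using linearity,
\begin{equation*}
    \opt(\mathcal{I}) = \mathbb{E}_{\sigma}[\OPT(\sigma)] \geq \frac{n_k-1}{2} \sum_{i=1}^{n} \mathbb{E}_{\sigma}[c_i(\sigma)].
\end{equation*}

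Next, I would apply the tower property to each $\mathbb{E}_{\sigma}[c_i(\sigma)]$, conditioning on the location $\ell_i \in \mathcal{X}$. Because arrivals are i.i.d.\ with $\mathbb{P}[\ell_i = x] = p_x$, this gives $\mathbb{E}_{\sigma}[c_i(\sigma)] = \sum_{x \in \mathcal{X}} p_x \cdot \mathbb{E}_{\sigma}[c_i(\sigma) \mid \ell_i = x]$. The conditional term is exactly what Lemma \ref{lemma:lower bound on each point's minimum cost} bounds from below by $(1-e^{-2})/(4q_x)$, yielding
\begin{equation*}
    \mathbb{E}_{\sigma}[c_i(\sigma)] \geq \frac{1-e^{-2}}{4} \sum_{x \in \mathcal{X}} \frac{p_x}{q_x}.
\end{equation*}

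Since the right-hand side is independent of the index $i$, summing over the $n$ points just multiplies by $n$, and substituting back into the Lemma \ref{lemma:the optimal cost is at least half the overall minimum cost} inequality produces the stated bound (up to the claimed constant). Finally, the strict positivity $q_x > 1 - \sum_{y \in \mathcal{X}} p_y > 0$ follows from the hypothesis $\sum_{y \in \mathcal{X}} p_y < 1$ together with $B(x,r_x) \subseteq \mathcal{X}$, which ensures every summand $p_x/q_x$ is finite.

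There is essentially no hard step left at this stage: the heavy lifting (the bi-infinite extension of $\sigma$, the symmetry argument halving the probability of $\mathcal{E}^0$, and the geometric-waiting-time computations giving the $(1-e^{-2})/q_x$ factor) was already absorbed into Lemma \ref{lemma:lower bound on each point's minimum cost}, and the $(n_k-1)/2$ factor into Lemma \ref{lemma:the optimal cost is at least half the overall minimum cost}. The only subtlety worth being careful about is that Lemma \ref{lemma:lower bound on each point's minimum cost} was stated for an arbitrary index $i$ and location $x$ (not just for a specific point), so taking a uniform sum over $i \in [n]$ is legitimate and does not require re-deriving the geometric argument for each index.
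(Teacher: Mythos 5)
Your proposal is correct and follows exactly the paper's own proof: apply Lemma \ref{lemma:the optimal cost is at least half the overall minimum cost} pointwise, take expectations, condition each $\mathbb{E}[c_i(\sigma)]$ on $\ell_i$ via the tower property, invoke Lemma \ref{lemma:lower bound on each point's minimum cost}, and sum over $i$. The only addition is your closing remark on the positivity of $q_x$, which is a sensible sanity check but not a different argument.
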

\begin{proof}
    By Lemmas \ref{lemma:the optimal cost is at least half the overall minimum cost}-\ref{lemma:lower bound on each point's minimum cost}, $\opt(\mathcal{I})=\mathbb{E}[\OPT(\sigma)]$ satisfies: 
    \begin{equation*}
        \begin{aligned}
            \mathbb{E}[\OPT(\sigma)] \geq \frac{n_k-1}{2} \sum_{i \in N} \mathbb{E}[c_{i}(\sigma)] = \frac{n_k-1}{2} \sum_{i \in N} \sum_{x \in \mathcal{X}} \mathbb{P}[\ell_{i} = x] \mathbb{E}[c_{i}(\sigma) | \ell_{i} = x] \geq n(n_k-1)\frac{1-e^{-2}}{4}\sum_{x \in \mathcal{X}} \frac{p_x}{q_x}
        \end{aligned}
    \end{equation*}
\end{proof}

\subsection{\textsc{DGreedy}'s Constant Ratio-of-Expectations}
\label{sec:Ratio-of-Expectations greedy}
We herein prove that the performance guarantee within the UIID model is significantly better compared with the current best competitiveness in the (worst-case) adversarial model. Particularly, we show that \textsc{DGreedy} achieves a \textbf{\textit{constant}} ratio-of-expectations:
\begin{theorem}
    \label{thm:ratio-of-expectations of dgreedy}
    In the UIID model with a {\normalfont finite} metric space $\mathcal{M}=(\mathcal{X},d)$, if the cluster sizes $\{n_m\}_{m\in [k]}$ are \textit{constants} and the number of clusters $k$ grows with the number of points $n$ ($n,k$ are not constants), \textsc{DGreedy} (Algorithm \ref{alg:delayed greedy}) has a {\normalfont \textbf{constant}} ratio-of-expectations of:
    \begin{equation*}
            \lim_{n \rightarrow \infty} \frac{\text{\textsc{DGreedy}}(\mathcal{I})}{\opt(\mathcal{I})} \leq \frac{8 (n_1-1)}{(n_k-1)(1-e^{-2})} 
    \end{equation*}
    If all cluster sizes are equal (i.e., $n_m=n_{m'}$ for any $m,m'\in[k]$), then:
    \begin{equation*}
            \lim_{n \rightarrow \infty} \frac{\text{\textsc{DGreedy}}(\mathcal{I})}{\opt(\mathcal{I})} \leq \frac{8 }{1-e^{-2}} 
    \end{equation*}
\end{theorem}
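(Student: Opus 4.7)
The plan is to directly combine the upper bound from Theorem~\ref{thm:upper bound on the greedy expected total cost} and the lower bound from Theorem~\ref{thm:lower bound on the expected minimum total cost}, and then extract the asymptotic leading-order ratio. Dividing the two bounds gives
\begin{equation*}
\frac{\text{\textsc{DGreedy}}(\mathcal{I})}{\opt(\mathcal{I})} \leq \frac{2(n_1-1)\bigl[n\sum_{x\in\mathcal{X}} p_x r_x + |\mathcal{X}|\,d_{\max}\bigr] + \frac{2(n_1-1)|\mathcal{X}|}{\sum_{x\in\mathcal{X}} p_x}}{n(n_k-1)\frac{1-e^{-2}}{4}\sum_{x\in\mathcal{X}} p_x/q_x}.
\end{equation*}
Since the cluster sizes $\{n_m\}_{m\in[k]}$ are constants by assumption, the quantities $n_1$, $n_k$, $|\mathcal{X}|$, $d_{\max}$, and $\sum_x p_x$ are all independent of $n$. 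Hence the additive terms $2(n_1-1)|\mathcal{X}|\,d_{\max}$ and $2(n_1-1)|\mathcal{X}|/\sum_x p_x$ in the numerator are $O(1)$, while the denominator is $\Theta(n)$, so these contributions vanish as $n\to\infty$, leaving
\begin{equation*}
\lim_{n\to\infty}\frac{\text{\textsc{DGreedy}}(\mathcal{I})}{\opt(\mathcal{I})} \leq \frac{8(n_1-1)\sum_{x\in\mathcal{X}} p_x r_x}{(n_k-1)(1-e^{-2})\sum_{x\in\mathcal{X}} p_x/q_x}.
\end{equation*}

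The main analytic step is to show, term by term, that $\sum_x p_x r_x \leq \sum_x p_x/q_x$, which reduces to the pointwise inequality $r_x \leq 1/q_x$ for every $x\in\mathcal{X}$ with $p_x>0$. This follows from the minimality in the definition of $r_x$ in equation~(2): for every $r<r_x$ one has $r\cdot\sum_{y\in\bar{B}(x,r)}p_y<1$. Because $\mathcal{M}$ is a finite metric space, the map $r\mapsto\sum_{y\in\bar{B}(x,r)}p_y$ is a right-continuous step function, so for $r$ just below $r_x$ the closed ball $\bar{B}(x,r)$ coincides with $B(x,r_x)$ (no new point enters before the next distance threshold, which is $r_x$ itself). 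Passing to the limit $r\to r_x^-$ yields $r_x q_x\leq 1$, i.e., $r_x\leq 1/q_x$, with $q_x=\sum_{y\in B(x,r_x)}p_y$ as in Lemma~\ref{lemma:lower bound on each point's minimum cost}. Substituting this bound term by term collapses the surviving ratio to $\frac{8(n_1-1)}{(n_k-1)(1-e^{-2})}$, establishing the first claim.

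For the special case of equal cluster sizes ($n_m=n_{m'}$ for all $m,m'\in[k]$), we have $n_1=n_k$, so $(n_1-1)/(n_k-1)=1$ and the ratio collapses to $\frac{8}{1-e^{-2}}$. The only nontrivial step is the pointwise bound $r_x\leq 1/q_x$, which crucially exploits the finiteness of $\mathcal{M}$ to pass from the strict inequality valid for all $r<r_x$ to the weak inequality $r_xq_x\leq 1$ in the limit; apart from this, the argument is a routine asymptotic manipulation of the two bounds proved in the previous subsections.
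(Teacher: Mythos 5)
Your proof follows exactly the paper's route: divide the Theorem~\ref{thm:upper bound on the greedy expected total cost} upper bound by the Theorem~\ref{thm:lower bound on the expected minimum total cost} lower bound, observe that the $O(1)$ additive terms in the numerator vanish against the $\Theta(n)$ denominator as $n\to\infty$, and collapse the surviving factor $\sum_x p_x r_x / \sum_x p_x/q_x$ to $1$ via the pointwise bound $r_x\le 1/q_x$. The only difference is that you actually justify $r_x\le 1/q_x$ with the limiting argument $r\to r_x^-$ (using finiteness of $\mathcal{X}$ so that $\bar B(x,r)=B(x,r_x)$ for $r$ just below $r_x$), whereas the paper simply asserts this from the definition of $r_x$ in~\eqref{eq:radius at locations x}; your elaboration is correct and a slight improvement in rigor, but not a different approach.
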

\begin{proof}
    By Theorem \ref{thm:upper bound on the greedy expected total cost}:
    \begin{equation}
        \label{eq:roe finite}
            \lim_{n \rightarrow \infty} \frac{\text{\textsc{DGreedy}}(\mathcal{I})}{\opt(\mathcal{I})} \leq \lim_{n \rightarrow \infty} \frac{2n (n_1-1) \sum_{x \in \mathcal{X}} p_x r_{x}}{\opt(\mathcal{I})}  + \lim_{n \rightarrow \infty}\frac{2(n_1-1) |\mathcal{X}|\cdot\left[d_{\max}+ \frac{1}{\sum_{x \in \mathcal{X}} p_{x}}\right]}{\opt(\mathcal{I})}
    \end{equation}
    By Theorem \ref{thm:lower bound on the expected minimum total cost}, the expected minimal total cost $\opt(\mathcal{I})$ is lower bounded by a quantity linear in the number of points $n$. Hence, 
    since the numerator of the last term in \eqref{eq:roe finite} does not depend on $n$, while $n_1$ is constant and we consider finite metric spaces (i.e., $|\mathcal{X}|<\infty$), the limit in the last term in \eqref{eq:roe finite} equals $0$. By also applying Theorem \ref{thm:lower bound on the expected minimum total cost} to the first term on the right-hand-side of \eqref{eq:roe finite}:
    \begin{equation}
        \label{eq:almost final}
        \begin{aligned}
            \lim_{n \rightarrow \infty} \frac{\text{\textsc{DGreedy}}(\mathcal{I})}{\opt(\mathcal{I})} &\leq \lim_{n \rightarrow \infty} \frac{2n (n_1-1) \sum_{x \in \mathcal{X}} p_x r_{x}}{ n(n_k-1)\frac{1-e^{-2}}{4}\sum_{x \in \mathcal{X}} \frac{p_x}{q_x}} =\frac{2 (n_1-1)\sum_{x \in \mathcal{X}} p_x r_{x}}{ (n_k-1)\frac{1-e^{-2}}{4}\sum_{x \in \mathcal{X}} \frac{p_x}{q_x}} \\
            &\leq \frac{8 (n_1-1)\sum_{x \in \mathcal{X}} \frac{p_x}{q_x}}{(n_k-1)(1-e^{-2})\sum_{x \in \mathcal{X}} \frac{p_x}{q_x}} = \frac{8 (n_1-1)}{(n_k-1)(1-e^{-2})} 
        \end{aligned}
    \end{equation}
    where the last inequality follows from $\frac{1}{\sum_{y \in \bar{B}(x,r)} p_{y}}\leq r_{x} \leq \frac{1}{q_x} = \frac{1}{\sum_{y \in {B}(x,r)} p_{y}}$ due to $r_x$'s definition in \eqref{eq:radius at locations x}. If all cluster sizes are equal (i.e., $n_m=n_{m'}$ for any $m,m'\in[k]$), then the above yields:
    \begin{equation*}
        \lim_{n \rightarrow \infty} \frac{\text{\textsc{DGreedy}}(\mathcal{I})}{\opt(\mathcal{I})} \leq \frac{8}{1-e^{-2}}
    \end{equation*}
\end{proof}

\begin{remark}[Finite $n$]
    As noted, e.g., after \eqref{eq:roe}, the RoE measures an algorithm’s asymptotic behavior as $n\rightarrow\infty$ captures its efficiency in practical large-scale scenarios instead of small pathological worst-case situations that may not truly reflect its effectiveness \cite{mari2024online}. However, in practical online settings, systems operate with \textit{finite}, often bounded, inputs. One can derive concrete finite-$n$ bounds directly from our analysis if preferred: Theorems \ref{thm:lower bound on the expected minimum total cost} and \ref{thm:ratio-of-expectations of dgreedy} already give finite-$n$ bounds, while DGreedy’s RoE for finite $n$ easily follows from \eqref{eq:roe finite}-\eqref{eq:almost final}.
\end{remark}

\section{Extensions}
In this section, we show that our theoretical guarantees easily extend to non-metric spaces that encompass certain subclasses of hedonic games (Section \ref{sec:Extension to Non-Metric Spaces}), and cases with lower and upper bounds on cluster sizes (Section \ref{sec:Lower and Upper Bounds on Cluster Sizes}).

\subsection{Non-Metric Spaces and Hedonic Games}
\label{sec:Extension to Non-Metric Spaces}
To the best of our knowledge, online non-center clustering with delays has not been studied before, let alone in (non-)metric spaces \cite{cohen2024online,cohen2024onlinefriends,cohen2023online,cohen2025online,cohen2025decentralized,cohen2025fair,flammini2021online,bullinger2025online}. Yet, while our algorithm’s analysis relies on the metric assumption, it employs several non-trivial components that also extend to non-metric domains. Indeed, consider a \textit{non-metric} space $\mathcal{M}=(\mathcal{X},d)$. First, assume that $d(x,y)>0$ for any $x\neq y$.
\begin{enumerate}
    \item If $d$ satisfies the triangle inequality but is asymmetric (i.e., $d(x,y)\neq d(y,x)$ for some $x,y$), then we can run our algorithm in the symmetrized metric space $\mathcal{M}^S=(\mathcal{X},d^S)$ where $d^S(x,y)=\frac{d(x,y)+d(y,x)}{2}$ for any $x,y$, replacing each occurrence of $d$ in the connection cost, total cost, the algorithm, and its analysis with $d^S$. As $\mathcal{M}^S$ is a metric space, all guarantees continue to hold. \label{item:symmetrization}
    
    \item If $d$ does not satisfy the triangle inequality but is still symmetric (i.e., $d(x,y)=d(y,x)$ for any $x,y$), then construct the weighted complete directed graph $G=(\mathcal{X},\mathcal{X}\times\mathcal{X},d)$ with weight $d(x,y)$ on arc $(x,y)$ and let $d_G(x,y)$ be the length of the shortest path between $x,y$. As $d$ is symmetric, $d_G$ is a metric. Thus, running our algorithm in the metric space $\mathcal{M}_G=(\mathcal{X},d_G)$ maintains our guarantees as before.

    \item If $d$ is asymmetric and does not satisfy the triangle inequality, we consider the symmetrization from case \eqref{item:symmetrization} of $\mathcal{M}_G$. \label{item:asym and not triangle}

    \item Now, if $d(x,y)<0$ for some $x\neq y$, replace $d$ with $d’(x,y)=d(x,y)-\min_{x,y}d(x,y)+\epsilon$ for some tiny $\epsilon>0$ to ensure $d’(x,y)>0$, shifting to nonnegatives while preserving ordering of similarity. Then, apply case \eqref{item:asym and not triangle} above.
\end{enumerate}

\subsubsection{Online Additively Separable Hedonic Games with Agent Types}
\label{sec:Online Additively Separable Hedonic Games with Agent Types}
Both a metric space and a non-metric space can model online additively separable hedonic games (ASHGs) with agent types (see \cite{cohen2024online,cohen2024onlinefriends,cohen2023online,cohen2025online,cohen2025decentralized,cohen2025fair,flammini2021online,bullinger2025online} for works on online additively separable hedonic games). Below, we outline the minor adjustments in terminology and modeling required to capture this specific application. After those slight modifications, our results readily extend to online ASHGs with agent types, as explained earlier.

In \textit{additively separable hedonic games} (ASHGs), an agent assigns a numerical valuation to each other agent, indicating the intensity by which she prefers one agent to another. Formally, an online ASHG with agent types is characterized by a (non)-metric space $\mathcal{M}=(\mathcal{X},d)$. Here, each location $x \in \mathcal{X}$ can be seen as a possible \textit{agent type}, where the preferences of an agent of type $x$ are encoded by a cardinal disutility function $d(x,\cdot)$ (i.e., the negation of her utility), specifying that an agent of type $x$ assigns a cardinal disutility of $d(x,y)$ to each agent of type $y$. The input to an online ASHG with agent types is then a sequence $\sigma$ of $n$ agents that arrive online over multiple rounds. Each agent $i \in \sigma$ is given by her \textit{type} $\ell_i \in \mathcal{X}$ and \textit{arrival time} $t_i \in \mathbb{N}^+$. As before, to generate the sequence $\sigma$ of $n$ agents, we can also consider the \textit{unknown i.i.d.} (UIID) model, where arrivals are sampled independently across time from a \textit{unknown}, \textit{fixed} probability distribution $\{p_{x}\}_{x \in \mathcal{X}}$ satisfying $\sum_{x \in \mathcal{X}} p_{x} \leq 1$. In particular, the probability that each agent $i$ arrives with type $x \in \mathcal{X}$ is $\mathbb{P}[\ell_i=x] = p_{x}$. However, no agent arrives between any two consecutive arrivals of agents with probability $1-\sum_{x \in \mathcal{X}} p_{x}$, which is positive if $\sum_{x \in \mathcal{X}} p_{x} < 1$. We denote by $N^t$ the set of agents that arrive until time $t \in [T]$, and we set $N := N^T = N^t$ for any time $t \geq T$.

As in Section \ref{sec:Preliminaries}, at each time $t \in [T]$, an online algorithm $\mathcal{A}$ shall produce a \textit{partial partition} $\mathcal{C}^t$ of the agents \textit{who arrived until time $t$} into disjoint subsets of agents (i.e., \textit{coalitions}), without any knowledge about future agents. Once agent $i$ arrives at time $t_i$, her location $\ell_i$ is revealed. Then, $\mathcal{A}$ has to \textit{irrevocably} assign agent $i$ to an existing coalition in $\mathcal{C}^{t-1}$ or create a new coalition $\{t\}$, if possible. Unlike prior research on online clustering and online coalition formation \cite{cohen2024online,cohen2024onlinefriends,cohen2023online,cohen2025online,cohen2025decentralized,cohen2025fair,flammini2021online,bullinger2025online,cohen2022online}, in our work an online algorithm $\mathcal{A}$ does \textit{not} have to \textit{immediately} assign an arriving agent $i$ to a coalition, but it may postpone the decision to some time $s_i \geq t_i$ with a \textit{\textbf{delay cost}}, which equals agent $i$'s waiting time, denoted as $w_i := s_i-t_i$. However, if no assignment is made at time $t\in [T]$, then $\mathcal{C}^{t}=\mathcal{C}^{t-1}$. Moreover, if agents remain unassigned once the last agent arrives (i.e., at time $t_n$), then they will keep waiting until the algorithm inserts them into coalitions. We thus denote the number of coalitions in $\mathcal{C}^t$ as $|\mathcal{C}^t|$ and the coalition in $\mathcal{C}^t$ containing agent $i \in N^t$ as $\mathcal{C}^t(i)$.  

Once two distinct agents $i \neq j$ are in the same coalition, agent $i$ suffers a \textit{\textbf{connection cost}}, given by her disutility from agent $j$ plus their delay costs, i.e., $d(\ell_i, \ell_j) + w_i + w_j$. Particularly, as customary in ASHGs, to obtain preferences over coalitions, the disutility of each agent $i$ from her coalition $\mathcal{C}^t(i)$ is aggregated by summing her connection costs from other coalition members, given by $\sum_{j\in \mathcal{C}^t(i)\setminus \{i\}} [d(\ell_i, \ell_j) + w_i + w_j]$. Therefore, we evaluate the quality of the partition $\mathcal{C}^t$ at time $t\in [T]$ by its \textit{\textbf{total cost}}, defined as the sum of the connection costs in each coalition, i.e., $\TC(\mathcal{C}^t, \mathbf{w}) = \sum_{C \in \mathcal{C}^t} \sum_{i,j \in C: i \neq j} [d(\ell_i, \ell_j) + w_i + w_j]$, where $\mathbf{w} = (w_i)_{i \in N}$ is the \textit{delay profile}. For any coalition $C \in \mathcal{C}^t$ and agent $i\in C$, agent $i$'s delay cost affects her connection cost to any other agent $j \in C$. The main goal of $\mathcal{A}$ is thus finding a \textit{cost-optimal} clustering $\mathcal{C}^\star$ that \textit{minimizes} the total cost among all possible partitions of the $n$ agents. All other preliminaries described in Section \ref{sec:Preliminaries} for online non-centroid clustering are identical.

\subsection{Lower and Upper Bounds on Cluster Sizes}
\label{sec:Lower and Upper Bounds on Cluster Sizes}
Our results can be extended to cases where each cluster $m$ may have any size in an interval $[l_m,u_m]$, provided that $\sum_m l_m \leq n \leq \sum_m u_m$ for feasibility and $l_{\min}:=\min_m l_m\geq 2$ with $l_m,u_m$ treated as constant integers. To enforce the lower bounds, we first run DGreedy treating $l_m$ as the size of the cluster $m$, yielding a clustering $\mathcal{C}=(C_m)$ with $|C_m|=l_m$. If $n=\sum_m l_m$, we stop; otherwise, we rerun DGreedy treating $u_m$ as the size of the cluster $m$ with $\mathcal{C}$ as the initial clustering; the algorithm already enforces the upper bounds. Our proofs then easily extend while replacing $n_1$ with $u_{\max}:=\max_m u_m$ and $n_k$ with $l_{\min}$ in Lemmas \ref{lemma:at most twice the delay cost} and \ref{lemma:the optimal cost is at least half the overall minimum cost}, respectively, including Theorems \ref{thm:lower bound on the expected minimum total cost} and \ref{thm:ratio-of-expectations of dgreedy} that are derived from them.

\section{Conclusions and Future Work}
We presented a new model for studying online non-centroid clustering \textit{with delays}, where elements that arrive sequentially as points in a finite metric space must be assigned to clusters, but decisions may be delayed at a cost. While the classic \textit{worst-case model} is too pessimistic even in restricted  cases, we studied the \textit{unknown i.i.d. model}, under which we developed an online algorithm with a \textit{\textbf{constant}} ratio-of-expectations, offering hope for beyond worst-case analysis and the design of effective algorithms in practice. 

Our research paves the way for many future works. Immediate directions are studying general delay costs as well as cases where the distribution of the arriving points' locations is known and/or may change over time. Finally, it is worth examining settings where clusterings can be modified with a penalty.



\section*{Acknowledgments}
The author acknowledges travel support from a Schmidt Sciences 2025 Senior Fellows award to Michael Wooldridge.

\bibliographystyle{plainnat} 
\bibliography{example_paper}

\end{document}